\newcommand{\consistent}{\textit{consistent}}
\newcommand{\emb}{\mathcal{E}}
\newcommand{\skeleton}{\mathrm{skel}}
\newcommand{\yield}{\mathrm{yield}}
\newcommand{\pert}{\mathrm{Pert}}
\newcommand{\skeletal}{\mathrm{Skel}}
\newcommand{\pqchoosable}{{\sc FPQ-Choosable Planarity Testing}\xspace}
\newtheorem{corollary}{Corollary}
\newtheorem{theorem}{Theorem}
\newtheorem{lemma}{Lemma}
\newproof{proof}{Proof}
\journal{Journal of Computer and System Sciences}
\begin{document}
	
\begin{frontmatter}

\title{Graph Planarity Testing with\\Hierarchical Embedding Constraints}

\author[inst1]{Giuseppe Liotta}
\ead{giuseppe.liotta@unipg.it}

\author[inst2]{Ignaz Rutter}
\ead{rutter@fim.uni-passau.de}

\author[inst1]{Alessandra~Tappini\corref{cor1}}
\ead{alessandra.tappini@studenti.unipg.it}

\cortext[cor1]{Corresponding author}

\address[inst1]{Dipartimento di Ingegneria, Universit\`a degli Studi di Perugia, Italy}
\address[inst2]{Department of Computer Science and Mathematics, University of Passau, Germany}

\begin{abstract}
	Hierarchical embedding constraints define a set of allowed cyclic orders for the edges incident to the vertices of a graph. These constraints are expressed in terms of FPQ-trees. FPQ-trees are a variant of PQ-trees that includes F-nodes in addition to P- and to Q-nodes. An F-node represents a permutation that is fixed, i.e., it cannot be reversed. Let $G$ be a graph such that every vertex of $G$ is equipped with a set of FPQ-trees encoding  hierarchical embedding constraints for its incident edges. We study the problem of testing whether $G$ admits a planar embedding such that, for each vertex $v$ of $G$, the cyclic order of the edges incident to $v$ is described by at least one of the FPQ-trees associated with~$v$.
	We prove that the problem is fixed-parameter tractable for biconnected graphs, where the parameters are the treewidth of $G$ and the number of FPQ-trees associated with every vertex of $G$.  We also show that the problem is NP-complete if parameterized by the number of FPQ-trees only, and  W[1]-hard if parameterized by the treewidth only.
	Besides being interesting on its own right, the study of planarity testing with hierarchical embedding constraints can be used to address other planarity testing problems. In particular, we apply our techniques to the study of NodeTrix planarity testing of clustered graphs. We show that NodeTrix planarity testing with fixed sides is fixed-parameter tractable when parameterized by the size of the clusters and by the treewidth of the multi-graph obtained by collapsing the clusters to single vertices, provided that this graph is biconnected.
\end{abstract}

\begin{keyword}
	Graph Algorithms \sep Fixed Parameter Tractability \sep Planarity Testing \sep Embedding Constraints \sep NodeTrix Planarity
\end{keyword}

\end{frontmatter}

      	
\section{Introduction}\label{se:introduction}

The study of graph planarity testing and of its variants is at the heart of graph algorithms and of their applications in various domains (see, e.g.,~\cite{p-pte-13}). Among the most studied variants we recall, for example, upward planarity testing, rectilinear planarity testing, clustered planarity testing, and HV-planarity testing (see, e.g.,~\cite{br-npcpcep-16,cd-cp-socg05,dlp-hvpac-19,fce-pcg-95,gt-upt-95,gt-ccuprp-01}). This paper studies a problem of graph planarity testing subject to embedding constraints.

In its more general terms, graph planarity with embedding constraints addresses the problem of testing whether a graph $G$ admits a planar embedding where the cyclic order of the edges incident to (some of) its vertices is totally or partially fixed. For example, Angelini et al.~\cite{adfjkpr-tppeg-15} and Jel\'inek et al.~\cite{jr-kttppeg-13} study the case when the planar embedding of a subgraph $H$ of $G$ is given as part of the input. Angelini et al.~\cite{adfjkpr-tppeg-15} present a linear-time solution to the problem of testing whether $G$ admits a planar embedding that extends the given embedding of $H$.  Jel\'inek et al.~\cite{jr-kttppeg-13} show that if the planarity test fails, an obstruction taken from a collection of minimal non-planar instances can be produced in polynomial time. A different planarity testing problem with embedding constraints is studied by Dornheim~\cite{d-pgtc-02}, who considers the case that $G$ is given with a distinguished set of cycles and it is specified, for each cycle, that certain edges must lie inside or outside the cycle. He proves NP-completeness in general and describes a polynomial-time solution when the graph is biconnected and any two cycles  share at most one vertex.  Da~Lozzo and Rutter~\cite{dr-aafcp-18} give an approximation algorithm for a restricted version of the problem. 

The research in this paper is inspired by a seminal work of Gutwenger~et~al. \cite{gkm-ptoei-08} who study the graph planarity testing problem subject to hierarchical embedding constraints.
Hierarchical embedding constraints specify for each vertex $v$ of $G$ which cyclic orders of the edges incident to $v$ are admissible in a constrained planar embedding of $G$. The term ``hierarchical'' reflects the  fact that these constraints describe ordering relationships both between sets of edges incident to a same vertex and, recursively, between edges within a same set. For example, Fig.~\ref{fi:embedding-constraints} shows a vertex~$v$, its incident edges, and a set of hierarchical embedding constraints on these edges. The edges in the figure are partitioned into four sets, denoted as $E_1$, $E_2$, $E_3$, and $E_4$; the embedding constraints allow only two distinct clockwise cyclic orders for these edge-sets, namely either $E_1E_2E_3E_4$ (Fig.~\ref{fi:embedding-constraints-a}) or $E_1E_3E_2E_4$ (Fig.~\ref{fi:embedding-constraints-b}). Within each set, the hierarchical embedding constraints of Fig.~\ref{fi:embedding-constraints} allow  the edges of  $E_1$, $E_2$, and $E_3$ to be arbitrarily permuted with one another, while the edges of $E_4$ are partitioned into three subsets $E_4'$, $E_4''$, and $E_4'''$ such that $E_4''$ must always appear between $E_4'$ and $E_4'''$ in the clockwise order around $v$. Also, the edges of $E_4'$ can be arbitrarily permuted, while the edges of $E_4''$ and the edges of $E_4'''$ have only two possible orders that are the reverse of one another.
\begin{figure}[tb]
	\centering
	\subfigure[]{\includegraphics[width=.85\textwidth,page=7]{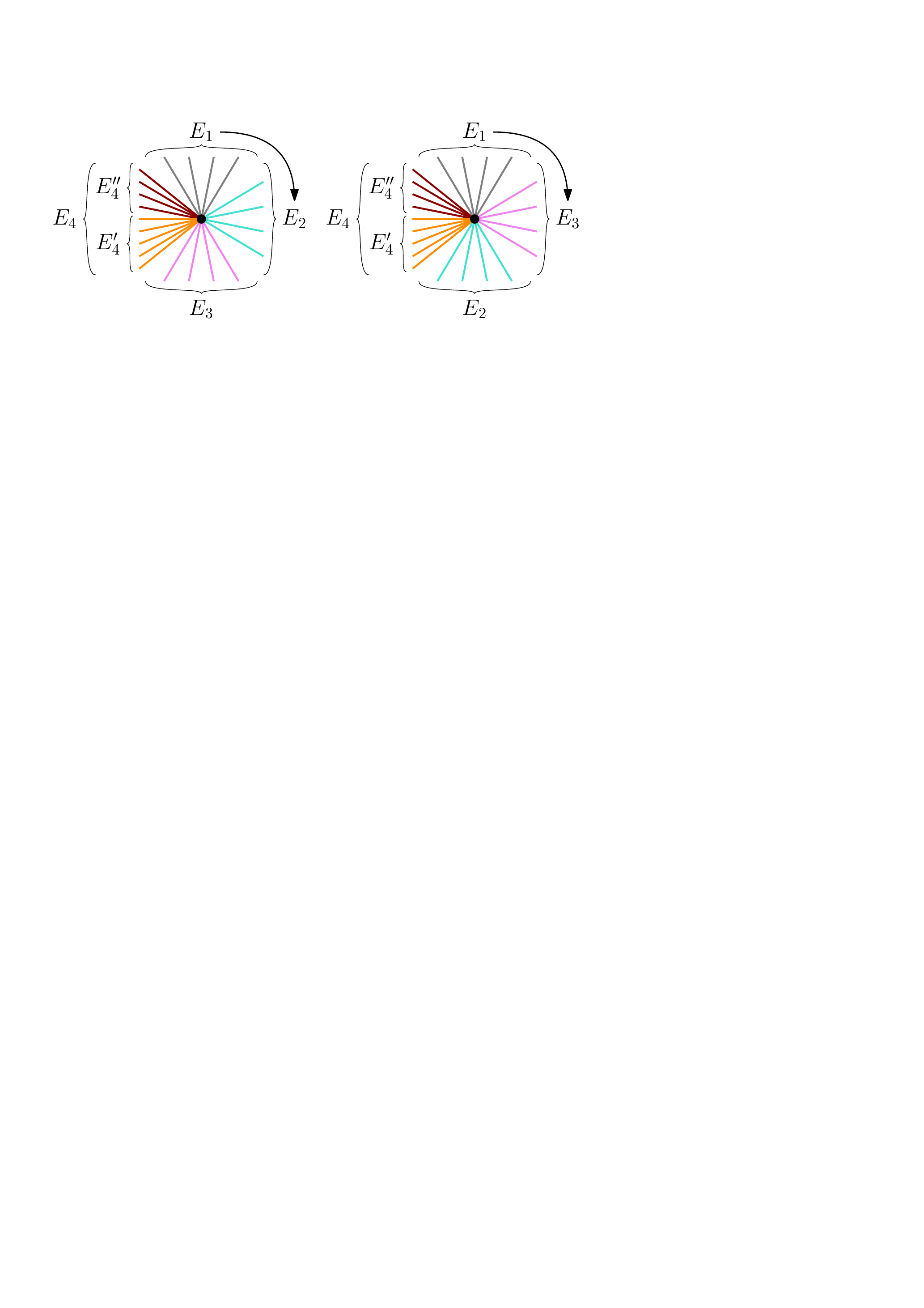}\label{fi:embedding-constraints-a}}
	\hfil
	\subfigure[]{\includegraphics[width=.85\textwidth,page=8]{embedding-constraints}\label{fi:embedding-constraints-b}}
	\caption{(a)-(b) Two examples of a vertex $v$ with hierarchical embedding constraints and the corresponding FPQ-trees.}
	\label{fi:embedding-constraints}
\end{figure}

Hierarchical embedding constraints can be conveniently encoded by using FPQ-trees,  a variant of PQ-trees that includes F-nodes in addition to P-nodes and to Q-nodes. An F-node encodes a permutation that cannot be reversed. For example, the hierarchical embedding constraints of Fig.~\ref{fi:embedding-constraints} can be represented by two FPQ-trees denoted as $T$ and $T'$ in Fig.~\ref{fi:embedding-constraints-a} and ~\ref{fi:embedding-constraints-b}, respectively. The leaves of $T$ and $T'$ are the elements of $E_1,E_2,E_3,E'_4,E_4''$, and $E'''_4$. In the figure, F-nodes are depicted as shaded boxes, Q-nodes as white boxes, and P-nodes as circles.
The F-node of the FPQ-tree $T$ in Fig.~\ref{fi:embedding-constraints-a} enforces the cyclic order $E_1E_2E_3E_4$ around $v$, while the F-node of the FPQ-tree $T'$ in Fig.~\ref{fi:embedding-constraints-b} enforces the cyclic order $E_1E_3E_2E_4$. Both in $T$ and in $T'$, the Q-node that is adjacent to the F-node enforces $E_4''$ to appear between $E_4'$ and $E_4'''$ in clockwise order around $v$. The constraints by which the edges of $E_1$, $E_2$, $E_3$, and $E'_4$ can be arbitrarily permuted around $v$ are encoded by P-nodes in $T$ and in $T'$.

Gutwenger et al.~\cite{gkm-ptoei-08} study the planarity testing problem with hierarchical embedding constraints by allowing {\em at most one} FPQ-tree per vertex. In this paper we generalize their study and allow {\em more than one} FPQ-tree associated with each vertex.
Besides being interesting on its own right, this generalization can be used to model and study other graph planarity testing problems. As a proof of concept, we apply our results to the study of NodeTrix planarity testing of clustered graphs.

\begin{figure}[tb]
	\centering
	\subfigure[]{\includegraphics[width=.47\textwidth,page=11]{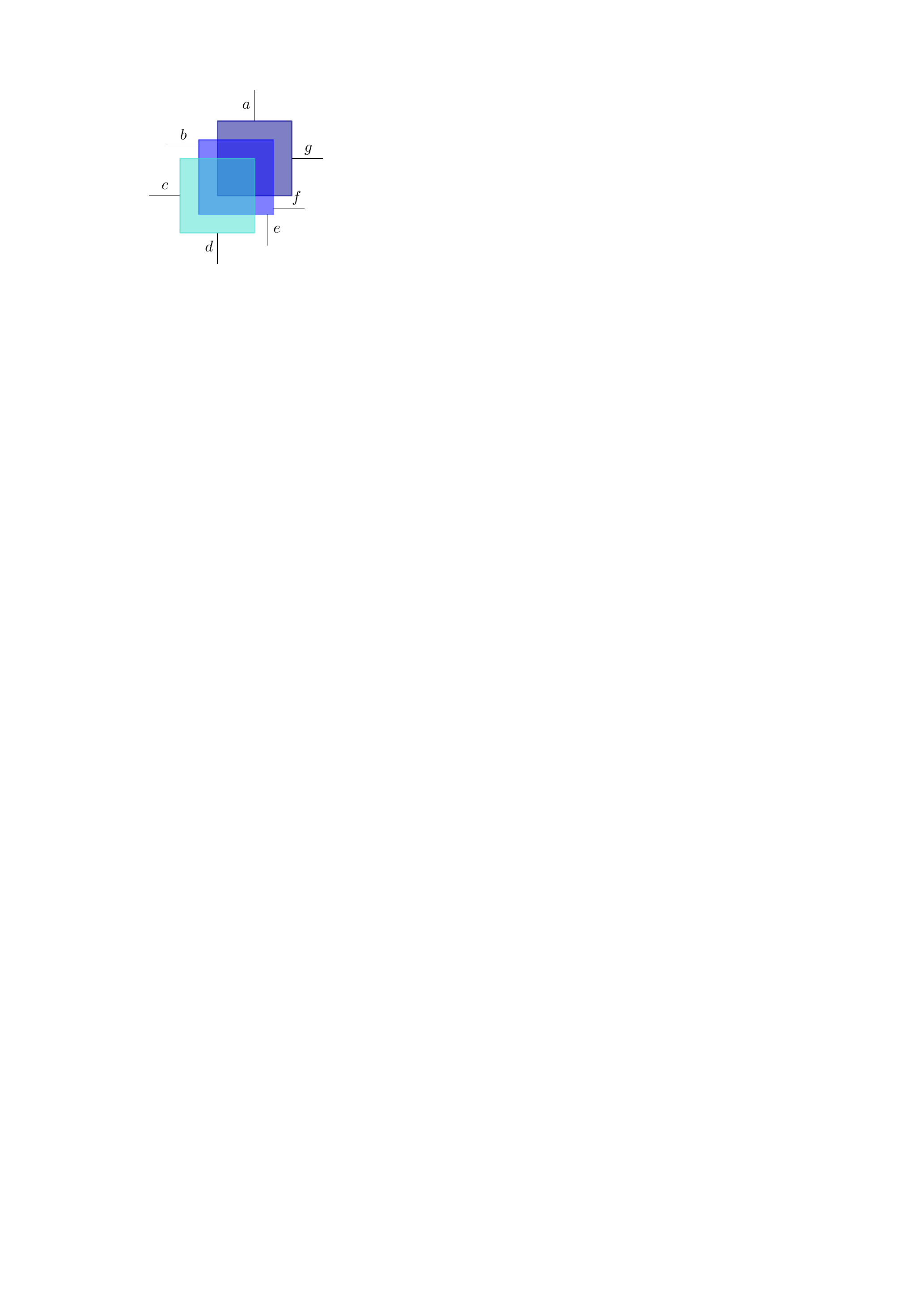}\label{fi:intro-a}}
	\hfill
	\subfigure[]{\includegraphics[width=.52\textwidth,page=12]{intersection-link-NodeTrix}\label{fi:intro-c}}
	\caption{(a) A non-planar flat clustered graph $G$. Clusters are highlighted in blue and green. (b) A planar NodeTrix representation of $G$.}
	\label{fi:intro}
\end{figure}

Before listing our results, we recall here that NodeTrix representations have been introduced to visually explore flat clustered graphs by Henry et al.~\cite{hfm-dhvsn-07} in one of the most cited papers of the InfoVis conference~\cite{citevis}. See also~\cite{hfm-dhvsn-07,bbdlpp-valg-11,ddfp-cnrcg-jgaa-17,dlpt-ntptsc-19}.
A flat clustered graph $G$ is a graph whose vertex set is partitioned into subsets called clusters. A NodeTrix representation of $G$ represents its clusters as adjacency matrices, while the edges connecting different matrices are represented as simple curves (see for example Figure~\ref{fi:intro}). The NodeTrix planarity testing problem asks whether $G$ admits a NodeTrix representation without edge crossings. The question can be asked both in the ``fixed sides'' scenario and in the ``free sides'' scenario. The fixed sides scenario specifies, for each edge $e$ connecting two matrices $M$ and $M'$, the sides (Top, Bottom, Left, Right) of $M$ and $M'$ to which $e$ must be incident; in the free sides scenario the testing algorithm can choose the sides to which $e$ is incident. NodeTrix planarity testing is known to be NP-complete in both scenarios~\cite{ddfp-cnrcg-jgaa-17,dlpt-ntptsc-19,bdlg-ckmepd-19}.
Our main results are the following.

\begin{itemize}
	\item We show that \pqchoosable is NP-complete even if the number of FPQ-trees associated with each vertex is bounded by a constant, and it remains NP-complete even if the FPQ-trees only contain P-nodes. This contrasts with the result of Gutwenger et al.~\cite{gkm-ptoei-08} who prove that \pqchoosable can be solved in linear time when each vertex is equipped with at most one FPQ-tree.
	We also prove that \pqchoosable is W[1]-hard parameterized by treewidth, and that it remains W[1]-hard even when the FPQ-trees only contain P-nodes.
	
	\item The above results imply that \pqchoosable is not fixed-parameter tractable if parameterized by treewidth only or by the number of FPQ-trees per vertex only. For a contrast, we show that \pqchoosable becomes fixed-parameter tractable for biconnected graphs when parameterized by both the treewidth and the number of FPQ-trees associated with every vertex.
	
	\item We show that there is a strict interplay between the \pqchoosable problem and the problem of testing whether a flat clustered graph $G$ is NodeTrix planar. Indeed, we prove that NodeTrix planarity testing with fixed sides is fixed-parameter tractable when parameterized by the size of the clusters of $G$ and by the treewidth of the multi-graph obtained by collapsing the clusters of $G$ to single vertices, provided that this graph is biconnected. If we consider the vertex degree of $G$ as an additional parameter, the fixed-parameter tractability immediately extends to NodeTrix planarity testing with free sides.
\end{itemize}

From a technical point of view,  our algorithmic approach is based on a combined usage of different data structures, namely SPQR-trees~\cite{dt-olpt-96}, FPQ-trees, and sphere-cut decomposition trees~\cite{dpbf-eeapg-10,gt-obdpg-08,st-crr-94}.
It may be worth recalling that a polynomial-time solution for NodeTrix planarity testing with fixed sides was known only when the size of each cluster is bounded by a constant and the treewidth of the graph obtained by collapsing the clusters to single vertices is two~\cite{dlpt-ntptsc-19}.


The rest of the paper is organized as follows. Section~\ref{se:preliminaries} reports preliminary definitions. Section~\ref{se:problem} introduces the \pqchoosable problem,
Section~\ref{se:pqchoosable-testing-hard} studies its computational complexity,
in Section~\ref{se:pqchoosable-testing-fpt} we describe a fixed-parameter tractability approach for \pqchoosable, and in Section~\ref{se:nodetrix} we analyze the interplay between \pqchoosable and NodeTrix Planarity testing. Concluding remarks and open problems are given in Section~\ref{se:open-problems}.

\section{Preliminaries}\label{se:preliminaries}
	
	We assume familiarity with  graph theory and algorithms, and we only briefly recall some of the basic concepts that will be used extensively in the rest of the paper (see also~\cite{arumugam2016handbook,dett-gd-99}).
	
	A \emph{PQ-tree} is a tree-based data structure that represents a family of permutations on a set of elements~\cite{bl-tcopiggpupa-76}. In a PQ-tree, each element is represented by one of the leaf nodes, and each non-leaf node is a \emph{P-node} or a \emph{Q-node}. The children of a P-node can be permuted arbitrarily, while the order of the children of a Q-node is fixed up to reversal.
	Given a graph $G$ together with a fixed combinatorial embedding, we can associate with each vertex $v$ a PQ-tree $T_v$ whose leaves represent the edges incident to $v$. Tree $T_v$ encodes a set of permutations for its leaves: Each of these permutations is in a bijection with a cyclic order of the edges incident to $v$. If there is a permutation $\pi_v$ of the leaves of $T_v$ that is in a bijection with a cyclic order $\sigma_v$ of the edges incident to $v$, we say that $T_v$ \emph{represents} $\sigma_v$, or equivalently that $\sigma_v$ \emph{is represented by} $T_v$.
	
	An \emph{FPQ-tree} is a PQ-tree where, for some of the Q-nodes, the reversal of the permutation described by their children is not allowed. To distinguish these Q-nodes from the regular Q-nodes, we call them \emph{F-nodes}.
	It may be worth recalling that Gutwenger et al.~\cite{gkm-ptoei-08} call this data structure ``embedding constraint'', and that their ``gc-nodes'' correspond to P-nodes, ``mc-nodes'' to Q-nodes, and ``oc-nodes'' to F-nodes.
	
	Let $G$ be a biconnected planar (multi-)graph. An \emph{SPQR-decomposition} of $G$ describes the structure of $G$ in terms of its triconnected components by means of a tree called the \emph{SPQR-decomposition tree}, and denoted as $\mathcal{T}$ (see, e.g.,~\cite{dt-olpt-96,dett-gd-99}). Tree  $\mathcal{T}$ can be computed in linear time and it has three types of internal nodes that correspond to different arrangements of the components of $G$. If the components are arranged in a cycle, they correspond to an \emph{S-node} of $\mathcal{T}$; if they share two vertices and are arranged in parallel, they correspond to a \emph{P-node} of $\mathcal{T}$; if they are arranged in a triconnected graph, they correspond to an \emph{R-node} of $\mathcal{T}$.
	The leaves of $\mathcal{T}$ are \emph{Q-nodes}, and each of them corresponds to an edge of $G$.
	To simplify the description and without loss of generality, we shall assume that every S-node of $\mathcal{T}$ has exactly two children.
	For each node $\mu$ of $\mathcal{T}$, the \emph{skeleton} of $\mu$ is an auxiliary graph that represents the arrangement of the triconnected components of $G$ corresponding to $\mu$, and it is denoted by $\skeleton(\mu)$. Each edge of $\skeleton(\mu)$ is called a \emph{virtual edge}, and the end-points of a (possibly virtual) edge are called \emph{poles}. Every virtual edge corresponds to a subgraph of $G$ called the \emph{pertinent graph}, that is denoted by $G_{\mu}$.
	Tree $\mathcal{T}$ encodes all possible planar combinatorial embeddings of $G$. These embeddings are determined by P- and R-nodes, since the skeletons of S- and Q-nodes have a unique embedding. Indeed, the skeleton of a P-node consists of parallel edges that can be arbitrarily permuted, while the skeleton of an R-node is triconnected, and hence it has a unique embedding up to a flip.
	Figure~\ref{fi:ex-a} shows a biconnected planar multi-graph $G$ and Figure~\ref{fi:ex-b} illustrates an SPQR-decomposition tree of $G$.
	
	Note that the planar combinatorial embeddings that are given by the SPQR-decomposition tree of a biconnected graph $G$ give constraints on the cyclic order of edges around each vertex of $G$. 
	These constraints can be encoded by associating a PQ-tree to each vertex $v$ of $G$, called the \emph{embedding tree} of $v$ and denoted as $T_v^\epsilon$ (see, e.g.,~\cite{br-spoace-16}). For example, Figure~\ref{fi:ex-c} shows the embedding tree $T_{v_2}^\epsilon$ of the vertex $v_2$ in Figure~\ref{fi:ex-a}. Note that edges $f$ and $h$ ($i$ and $j$, resp.) belong to an R-node (a P-node, resp.) in the SPQR-decomposition tree of $G$, hence the corresponding leaves are connected to a Q-node (a P-node, resp.) in~$T_{v_2}^\epsilon$.

	\begin{figure}[tb]
		\centering
		\subfigure[]{\includegraphics[width=.21\textwidth,page=1]{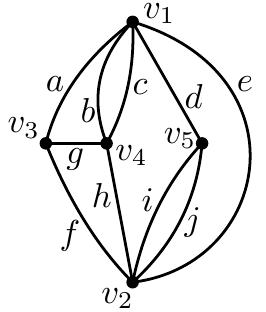}\label{fi:ex-a}}
		\hfil
		\subfigure[]{\includegraphics[width=.63\textwidth,page=6]{ex}\label{fi:ex-b}}
		\hfil
		\subfigure[]{\includegraphics[width=.13\textwidth,page=7]{ex}\label{fi:ex-c}}
		\caption{(a) A biconnected planar graph $G$. (b) An SPQR-decomposition tree of $G$. The skeletons of S-, P-, and R- nodes are inside gray boxes, while Q-nodes are depicted as letters. (c) The embedding tree of $v_2$.}
		\label{fi:ex}
	\end{figure}

\section{The FPQ-choosable Planarity Testing Problem}\label{se:problem}

Let $G=(V,E)$ be a (multi-)graph, let $v\in V$, and let $T_v$ be an FPQ-tree whose leaf set is $E(v)$, i.e., the set of the edges incident to $v$. 
We define $\consistent(T_v)$ as the set of cyclic orders of the edges incident to $v$ in $\emb$ that are represented by the FPQ-tree $T_v$.

An \emph{FPQ-choosable graph} is a pair $(G,D)$ where $G=(V,E)$ is a (multi-) graph, and $D$ is a mapping that associates each vertex $v\in V$ with a set $D(v)$ of FPQ-trees whose leaf set is $E(v)$.
Given a planar embedding $\emb$ of $G$, we denote by $\emb(v)$ the cyclic order of edges incident to $v$ in $\emb$.
%
An \emph{assignment} $A$ is a function that assigns to each vertex $v\in V$ an FPQ-tree in $D(v)$. We say that $A$ is \emph{compatible with $G$} if there exists a planar embedding $\emb$ of $G$ such that $\emb(v) \in \consistent(A(v))$ for all $v \in V$. In this case, we also say that $\emb$ is \emph{consistent with $A$}.

An FPQ-choosable graph $(G,D)$ is \emph{FPQ-choosable planar} if there exists an assignment of FPQ-trees that is compatible with $G$.
Figure~\ref{fi:compatible-c} shows an FPQ-choosable planar graph $G$, whose vertices are equipped with the following sets of FPQ-trees: $D(u_1)=\{T_\alpha\}$, $D(u_2)=\{T_\beta,T_\gamma\}$, $D(u_3)=\{T_\delta\}$, and $D(u_4)=\{T_\varepsilon\}$.
There are two possible assignments that differ from one another for the chosen FPQ-tree in the set $D(u_2)$. As illustrated in Figures~\ref{fi:compatible-a} and~\ref{fi:compatible-b}, the first assignment is compatible with $G$, while there is no planar embedding that is consistent with the second assignment.

\begin{figure}[tb]
	\centering
	\subfigure[]{\includegraphics[width=.45\textwidth,page=3]{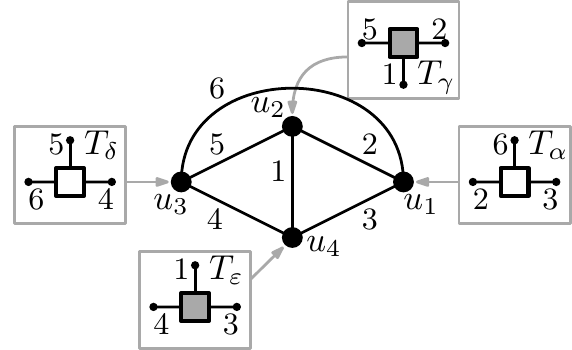}\label{fi:compatible-c}}\\
	\subfigure[]{\includegraphics[width=.45\textwidth,page=1]{compatible}\label{fi:compatible-a}}
	\hfil
	\subfigure[]{\includegraphics[width=.45\textwidth,page=2]{compatible}\label{fi:compatible-b}}
	\caption{(a) An FPQ-choosable planar graph $(G,D)$. (b) A planar embedding of $G$ that is consistent with assignment $\{A(u_1)=T_\alpha, A(u_2)=T_\gamma, A(u_3)= T_\delta, A(u_4)=T_\varepsilon\}$; the assignment is compatible with $G$. (c) A non-planar embedding of $G$ that is consistent with assignment $\{A(u_1)=T_\alpha, A(u_2)=T_\beta, A(u_3)= T_\delta, A(u_4)=T_\varepsilon\}$; there is no planar embedding that is consistent with $A$.}
	\label{fi:compatible}
\end{figure}
The \pqchoosable problem receives as input an FPQ-choosable graph $(G,D)$ and it asks whether $(G,D)$ is FPQ-choosable planar, i.e., it asks whether there exists an assignment that is compatible with $G$. In the rest of the paper we are going to assume that $G$ is a biconnected (multi-) graph. Clearly $G$ must be planar or else the problem becomes trivial.
Also, any assignment that is compatible with $G$ must define a planar embedding of $G$ among those described by an SPQR-decomposition tree of $G$.

Therefore, a preliminary step for an algorithm that tests whether $(G,D)$ is FPQ-choosable planar is to intersect each FPQ-tree $T_v \in D(v)$ with the embedding tree $T_v^\epsilon$ of $v$, so that the cyclic order of the edges incident to $v$ satisfies both the constraints given by $T_v$ and the ones given by $T_v^\epsilon$. (See, e.g.,~\cite{br-spoace-16} for details about the operation of intersection between two PQ-trees, whose extension to the case of FPQ-trees is straightforward since F-nodes are just a more constrained version of Q-nodes). Therefore, from now on we shall assume that the FPQ-trees of $D$ have been intersected with the corresponding embedding trees and, for ease of notation, we shall still denote with $D(v)$ the set of FPQ-trees associated with $v$ and resulting from the intersection. We also remove the null-tree, which represents the empty set of permutations, from the sets $D(v)$.  Clearly, a necessary condition for the FPQ-choosable planarity of $(G,D)$ is that $D(v)$ is not the empty set for every $v \in G$.

	\section{Complexity of FPQ-choosable Planarity Testing}\label{se:pqchoosable-testing-hard}
	

	As we are going to show, \pqchoosable is fixed-parameter tractable when parameterized by treewidth and number of FPQ-trees per vertex. One may wonder whether the problem remains fixed-parameter tractable if parameterized by the treewidth only or by the number of FPQ-trees per vertex only. The following two theorems answer this question in the negative.
	
 	\begin{theorem}\label{th:pqchoosable-npcomplete}
 		\pqchoosable with a bounded number of FPQ-trees per vertex is NP-complete. It remains NP-complete even when the FPQ-trees have only P-nodes.
 	\end{theorem}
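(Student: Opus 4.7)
The plan is to argue membership in NP and then to prove NP-hardness by a reduction from a suitable planar variant of 3-SAT (e.g., Planar 3-SAT).

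For NP membership, a nondeterministic algorithm guesses an assignment $A$ with $A(v)\in D(v)$ for every $v\in V$, together with a combinatorial embedding $\emb$ of $G$. Planarity of $\emb$ and, for each $v$, the membership $\emb(v)\in\consistent(A(v))$ can both be tested in polynomial time; in particular, checking whether a fixed cyclic order is represented by an FPQ-tree is a linear-time traversal of the tree. Alternatively, after guessing $A$ one can invoke the linear-time algorithm of Gutwenger et al.\ for a single FPQ-tree per vertex.

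For NP-hardness I would design gadgets that use only P-nodes and that carry only a constant number of FPQ-trees per vertex. For each variable $x$ I would build a biconnected planar \emph{variable gadget} whose external interface includes one \emph{literal port} per occurrence of $x$ in a clause; a distinguished vertex of the gadget is equipped with exactly two P-only FPQ-trees $T_x^{\top}$ and $T_x^{\bot}$ whose laminar contiguity families force different groupings of the literal ports (roughly, one groups the positive-occurrence ports into a single contiguous arc, the other groups the negative-occurrence ports). For each clause $C=\ell_1\lor\ell_2\lor\ell_3$ I would attach a small planar \emph{clause gadget} to the three relevant ports whose P-only FPQ-trees admit a planar extension exactly when at least one of its three incident literal-port edges is grouped on the designated ``true'' side of its variable gadget. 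Using planarity of the 3-SAT incidence graph, the literal-to-clause connections can be routed without crossings; a routine case analysis would then show that an assignment of FPQ-trees is compatible with $G$ iff the induced truth assignment satisfies every clause. Each vertex of the construction carries at most a constant number of FPQ-trees (two at variable vertices, a bounded number at clause vertices), matching the statement of the theorem.

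The main obstacle is that P-nodes can express only the contiguity of a set of incident edges, not their relative ordering or any betweenness relation. Consequently the propagation of a truth value from a variable gadget to each clause containing that variable must be realized entirely through laminar contiguity requirements combined with the global planarity of $G$. Designing the variable and clause gadgets so that the contiguity patterns at all vertices are simultaneously realizable by a single planar embedding precisely when the 3-SAT instance is satisfiable, while keeping the number of FPQ-trees per vertex bounded by a fixed constant, is the crux of the reduction.
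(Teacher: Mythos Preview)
Your NP-membership argument is fine and matches the paper's.

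For NP-hardness, however, what you have written is a plan rather than a proof. You explicitly identify the crux yourself: with P-nodes only, all you can enforce at a vertex is a laminar family of contiguity constraints on the incident edges, and you must somehow leverage this together with global planarity to propagate a Boolean value and to test a disjunction at each clause. You then stop short of actually constructing the variable and clause gadgets or carrying out the ``routine case analysis''. That is the entire content of the reduction; without it there is no proof. In particular, it is not at all obvious that a clause gadget built from P-only trees on a bounded number of vertices can detect ``at least one of three incoming literals is true'' purely via contiguity and planarity, nor that a variable gadget with just two P-only trees can consistently broadcast the chosen value to arbitrarily many occurrences while keeping the whole instance planar. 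Until those gadgets are exhibited and verified, the hardness claim is unsupported.

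For comparison, the paper takes a quite different route: it reduces from $3$-edge-coloring of triconnected cubic (non-planar) graphs. Each edge of the cubic graph is replaced by a bundle of parallel edges (three in the Q-node version, six in the P-only version), and crossings of a fixed drawing are replaced by degree-$4$ dummy vertices with a single unconstrained P-node tree. Each original vertex has degree $9$ (resp.\ $18$) and is equipped with exactly six FPQ-trees, one per permutation of the three colours on its three incident bundles. The key observation is that a bundle of parallel edges between $u'$ and $v'$ is planar precisely when the orders imposed at the two endpoints are reverses of each other; encoding each colour by a specific order (or, in the P-only version, by a specific pairing of the six strands) forces the two endpoints to agree on the colour of that bundle. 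This mechanism makes both directions of the correctness proof short and avoids any SAT-style gadgetry.
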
	
	\begin{proof}
 	We denote with $n$ the number of vertices of the input graph and we assume that for each vertex $v$ of the input, $|D(v)| \in O(n)$. We generate all possible assignments by performing $O(n \log n)$ non-deterministic guess operations and, for each assignment, we decide whether it is compatible with the input graph by applying the linear-time algorithm of Gutwenger et al.~\cite{gkm-ptoei-08}. It follows that \pqchoosable is in NP. 
 
 	In order to show that \pqchoosable is NP-hard, we use a reduction from the problem of deciding whether a triconnected cubic graph admits a $3$-edge-coloring. The $3$-edge-coloring problem for a cubic graph asks whether it is possible to assign a color in the set $\{red, green, blue\}$ to each edge of the graph so that no two edges of the same color share a vertex.
 	The problem is known to be NP-complete for triconnected cubic non-planar graphs~\cite{h-npcec-81}.
 	Note that a triconnected cubic graph admits a $3$-edge-coloring if and only if it admits a $3$-edge-coloring for any choice of rotation system and outer face, hence we perform the construction starting from a triconnected cubic graph with an arbitrary choice of rotation system and outer face, which makes it possible to talk about edge crossings in the graph.
 	For any given triconnected cubic graph $G$ we construct an FPQ-choosable graph $(G',D')$ with $|D(v')\le 6|$ for each vertex $v'$ of $G'$, that is FPQ-choosable planar if and only if $G$ has a $3$-edge-coloring. Since every vertex of $(G',D')$ is equipped with at most six FPQ-trees, the statement will follow. See Figure~\ref{fi:cubic} for an example.
 	
 	\begin{figure}[tbp]
 		\centering
 		\subfigure[]{\includegraphics[width=.36\textwidth,page=5]{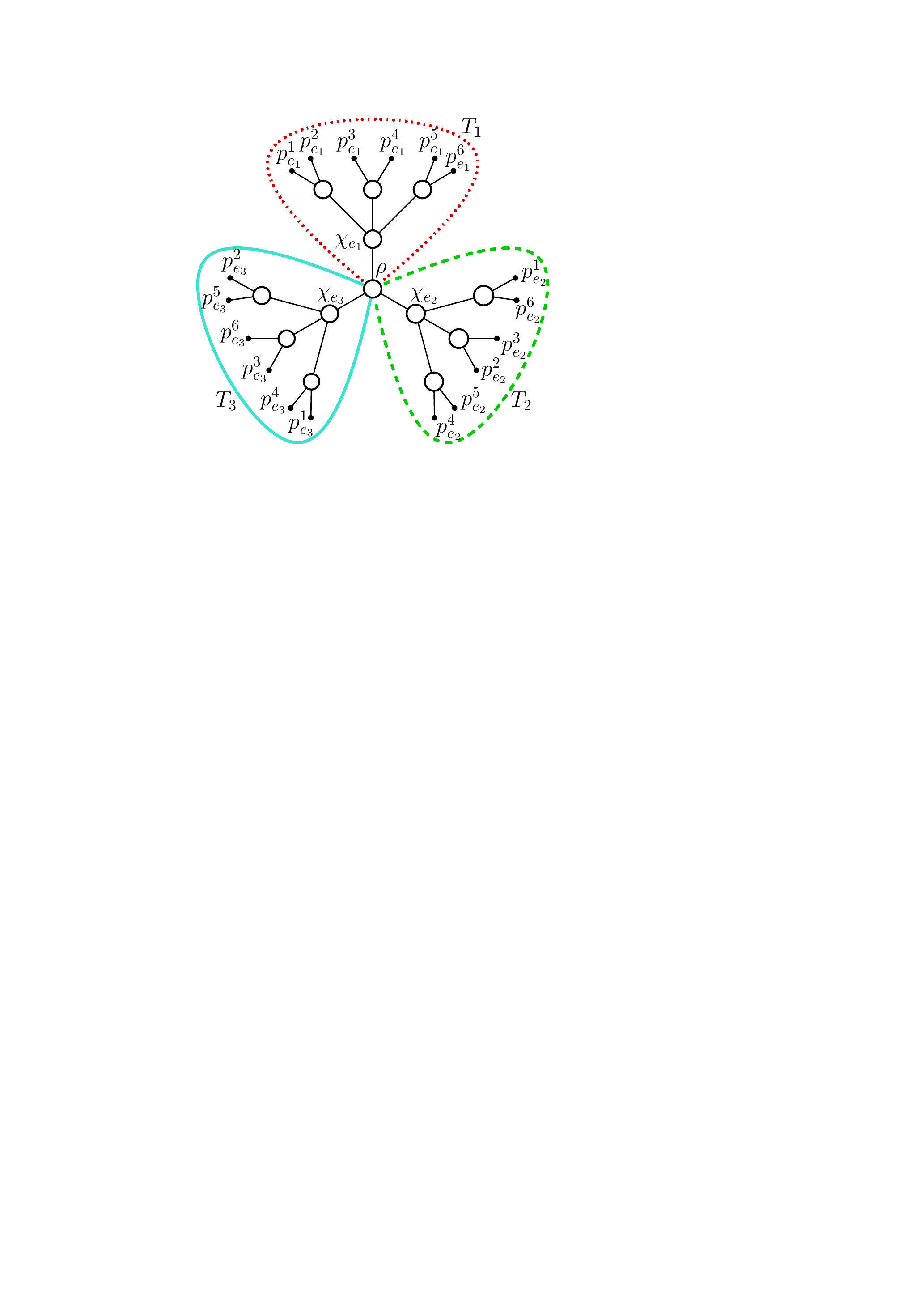}\label{fi:cubic-a}}
 		\hfil
 		\subfigure[]{\includegraphics[width=.4\textwidth,page=6]{3edgecoloring}\label{fi:cubic-b}}
 		\hfil
 		\subfigure[]{\includegraphics[width=.8\textwidth,page=7]{3edgecoloring}\label{fi:cubic-c}}
 		\caption{(a) A triconnected cubic non-planar graph $G$ with a proper $3$-edge-coloring. (b) The corresponding FPQ-choosable graph $(G',D')$; the dummy vertices are black squares, and the FPQ-tree associated with $v_c$ is inside a gray circle. $D'(v')$ and $D'(u')$ are shown in (c).}
 		\label{fi:cubic}
 	\end{figure}
	 
	The construction that maps any triconnected cubic graph $G$ into an FPQ-choosable graph $(G',D')$ is as follows. Each vertex $v$ of $G$ is associated with a vertex $v'$ in $G'$, and each edge $e=(u,v)$ of $G$ is associated in $G'$ with three parallel edges $e^1$, $e^2$, and $e^3$,
	whose end-vertices are $u'$ and $v'$.  See for example Figure~\ref{fi:3edgecoloring-a} and~\ref{fi:3edgecoloring-b}.
	Each crossing $c$ of $G'$ is replaced with a dummy vertex~$v_c$. Note that every vertex of $G'$ has either degree $4$ or $9$, since we can assume that each crossing is the intersection of exactly two edges (otherwise a small perturbation can be applied).
	Each vertex $v_c$ of $G'$ having degree $4$ is equipped with one FPQ-tree $T_{v_c}$ consisting of a P-node whose leaves represent the four edges incident to $v_c$.
	Each vertex $v'$ of $G'$ having degree $9$ is equipped with a set $D'(v')$ of FPQ-trees. Each FPQ-tree in $D'(v')$ consists of a P-node $\rho$ connected to three Q-nodes $\chi_{e_1}$, $\chi_{e_2}$, and $\chi_{e_3}$, which have three leaves each, denoted as $p_{e_i}^1$, $p_{e_i}^2$, $p_{e_i}^3$. See for example Figure~\ref{fi:3edgecoloring-c}, that shows an FPQ-tree of the vertex $v'$ in  Figure~\ref{fi:3edgecoloring-b}. 
	
	Observe that every FPQ-tree in $D'(v')$ can be defined as the union of three trees $T_1$, $T_2$, and $T_3$, such that each $T_i$ consists of node $\rho$, node $\chi_{e_i}$, and the three leaves of $\chi_{e_i}$ ($1 \le i \le 3$). For example, $T_1$, $T_2$, and $T_3$ are highlighted in Figure~\ref{fi:3edgecoloring-c}.  Consider a Q-node $\chi_{e_i}$ and the cyclic order $\sigma_i$ of its incident edges in $T_i$. If the leaves of $T_i$ appear as $p_{e_i}^1$, $p_{e_i}^2$, $p_{e_i}^3$ in $\sigma_i$, we say that $T_i$ has a \emph{red configuration}; if they appear as $p_{e_i}^1$, $p_{e_i}^3$, $p_{e_i}^2$, we say that $T_i$ has a \emph{green configuration}; if they appear as $p_{e_i}^2$, $p_{e_i}^1$, $p_{e_i}^3$, we say that $T_i$ has a \emph{blue configuration}. For example, in Figure~\ref{fi:3edgecoloring-c} $T_1$ has a red configuration, $T_2$ has a green configuration, and $T_3$ has a blue configuration.
	 
	Let $e_1$, $e_2$, and $e_3$ be the three edges incident to a vertex $v$ in the triconnected cubic graph $G$ and let $v'$ be its corresponding vertex in $(G',D')$. For each $3$-edge-coloring of $G$, there is a bijection between an FPQ-tree $T_{v'}$ in $D'(v')$ and the colors of the three edges incident to $v$.  
	Namely, for a $3$-edge-coloring of $G$ where $e_i$ is red, we impose a red configuration to $T_i$ in $T_{v'}$; if $e_i$ is green, we impose a green configuration to $T_i$; if $e_i$ is blue, we impose a blue configuration to $T_i$. We say that \emph{$T_i$ matches the color of $e_i$} and that \emph{$T_{v'}$ matches the color of the edges incident to $v$}. 
	For example, the FPQ-tree of Figure~\ref{fi:3edgecoloring-c} matches the color of the edges incident to $v$ in Figure~\ref{fi:3edgecoloring-a}, because $T_1$ matches the color of $e_1$, $T_2$ matches the color of $e_2$ and $T_3$ matches the color of $e_3$.  Since there are six possible permutations of the three colors around $v$ in $G$, we have that $|D'(v')|=6$ in $(G',D')$. 
	 
	We now prove that if $G$ admits a $3$-edge-coloring, $(G',D')$ is FPQ-choosable planar. Let $v$ be any vertex of $G$ with incident edges $e_1$, $e_2$, $e_3$, and let $v'$ be the vertex that corresponds to $v$ in $(G',D')$. We define an assignment $A$ for $(G',D')$ where $A(v')$ is the FPQ-tree $T_{v'} \in D'(v')$ that matches the color of the edges incident to $v$.
	For every vertex $v_c$ of $(G',D')$ of degree $4$, $A(v_c)$ is the only FPQ-tree associated with $v_c$, hence the cyclic order of the edges around $v_c$ is totally free.
	We show that there exists a planar embedding of $G'$ that is consistent with $A$.
	Since $T_{u'}$ matches the color of the edges incident to $u$ in $G$ and $T_{v'}$ matches the color of the edges incident to $v$ in $G$, the leaves of $T_{u'}$ and the leaves of $T_{v'}$ representing the edges (possibly subdivided by dummy vertices) connecting $u'$ and $v'$ can be ordered so to avoid edge crossings. If, for example, edge $e=(u,v)$ is red in $G$, we have that $T_{u'}$ has a subtree $T'$ and $T_{v'}$ has a subtree $T''$ such that both $T'$ and $T''$ match the red color. The sets of leaves of $T'$ and $T''$ represent the same set of edges, and they appear in reverse order around $u'$ and around $v'$ in a planar embedding of $G'$. It follows that if $G$ admits a $3$-edge-coloring, $(G',D')$ is FPQ-choosable planar.
	
	\begin{figure}[tbp]
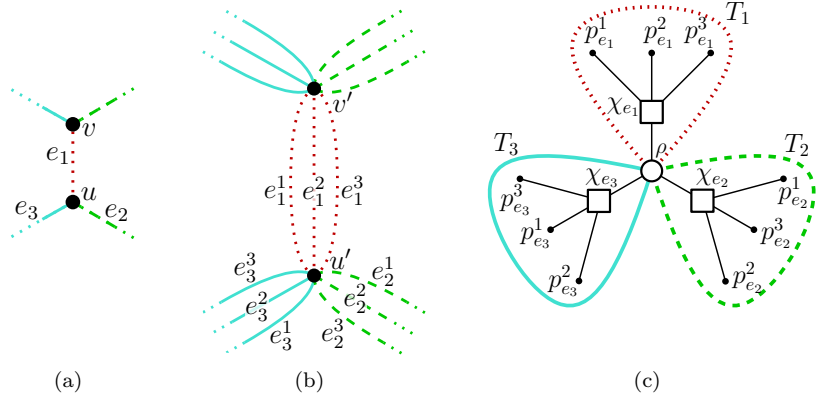

		\centering
		\subfigure[]{\includegraphics[width=.26\textwidth,page=3]{3edgecoloring}\label{fi:3edgecoloring-a}}
		\hfil
		\subfigure[]{\includegraphics[width=.25\textwidth,page=4]{3edgecoloring}\label{fi:3edgecoloring-b}}
		\hfill
		\subfigure[]{\includegraphics[width=.47\textwidth,page=2]{3edgecoloring}\label{fi:3edgecoloring-c}}
		\caption{(a) A vertex $u$ of a triconnected cubic graph $G$ and its incident edges $e_1=(u,v)$, $e_2$, and $e_3$. (b) Three parallel edges of $G'$ that are associated with edge $e_1$ of $G$. (c) An FPQ-tree $T_{u'}$ associated with vertex $u'$: $T_1$ has a red configuration, $T_2$ has a green configuration, and $T_3$ has a blue configuration.}
		\label{fi:3edgecoloring}
	\end{figure}
	  	 
	Suppose for a converse that $(G',D')$ is FPQ-choosable planar. There exists an assignment $A$ that is compatible with $G'$. Assignment $A$ defines the cyclic order of the edges incident to each vertex in a planar embedding of $G'$. Recall that for any two vertices $u'$ and $v'$ having degree $9$, they are connected by three parallel edges (possibly subdivided by dummy vertices),
	where $e$ is the edge of $G$ in a bijection with these three edges. Since $A$ is compatible with $G'$, the two FPQ-trees $T_{u'}=A(u')$ and $T_{v'}=A(v')$ both contain two subtrees $T'$ and $T''$ such that: (i) $T'$ and $T''$ have the same set of three leaves; (ii) these three leaves represent edges of $G'$ that correspond to $e^1$, $e^2$, and $e^3$; (iii) $T'$ and $T''$ have the same red (green, blue) configuration. We color edges $e^1$, $e^2$, and $e^3$ with the red (green, blue) color depending on the color configuration of $T'$ and of $T''$. By iterating this procedure over all triplets of edges we have that around every vertex of degree~$9$ in $G'$ there are three consecutive triplets of edges such that the edges of each triplet all have the same color and no two triplets have the same color. A $3$-edge-coloring of $G$ is therefore obtained by giving every edge $e$ of $G$ the same color as the one of the corresponding triplet $e^1$, $e^2$, and $e^3$ in $G'$. It follows that if $(G',D')$ is FPQ-choosable planar then $G$ has a $3$-edge-coloring.
	
	\begin{figure}[tbp]
		\centering
		{\includegraphics[width=.45\textwidth,page=1]{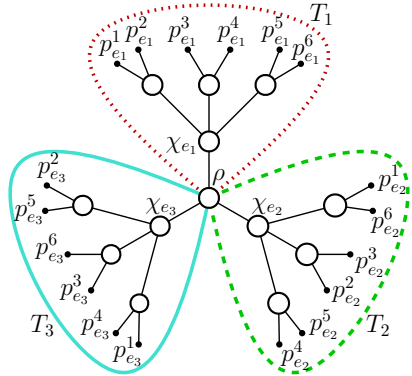}}
		\caption{An FPQ-tree with only P-nodes associated with a vertex of degree $18$ in $G'$.}
		\label{fi:3edgecoloring-pnodes}
	\end{figure}
	
	In order to prove that the problem remains NP-complete if the FPQ-trees associated with the vertices have only P-nodes, we construct an FPQ-choosable graph $(G',D')$ in a slightly different way from the one described above. In particular, each edge of $G$ is associated with \emph{six} parallel edges in $G'$, and each vertex $v'$ having degree $18$ in $G'$ is equipped with six FPQ-trees as the one in Figure~\ref{fi:3edgecoloring-pnodes}.
	In this case each FPQ-tree associated with a vertex $v'$ of $G'$ having degree $18$ is the union of three trees $T_1$, $T_2$, and $T_3$, such that each of their three pairs of leaves are connected to a P-node, which enforces each pair of leaves to appear consecutively. If $T_i$ ($1 \le i \le 3$) has a red configuration, the two leaves $(p_{e_i}^1,p_{e_i}^2)$ must be consecutive, as well as the leaves $(p_{e_i}^3,p_{e_i}^4)$, and the leaves $(p_{e_i}^5,p_{e_i}^6)$; if $T_i$ has a green configuration the two leaves $(p_{e_i}^1,p_{e_i}^6)$ must be consecutive, as well as the leaves $(p_{e_i}^3,p_{e_i}^2)$, and the leaves $(p_{e_i}^5,p_{e_i}^4)$; if $T_i$ has a blue configuration the two leaves $(p_{e_i}^1,p_{e_i}^4)$ must be consecutive, as well as the leaves $(p_{e_i}^3,p_{e_i}^6)$, and the leaves $(p_{e_i}^5,p_{e_i}^2)$. This guarantees that any two adjacent vertices $u'$ and $v'$ of $G'$ are such that if $T_{u'}$ and $T_{v'}$ match the same color, there is a cyclic order represented by $T_{u'}$ and $T_{v'}$ such that the edges incident to $u'$ and the edges incident to $v'$ do not cross. Conversely, if they match different colors these edges must cross.	\qed
	\end{proof}

We remark that Theorem~\ref{th:pqchoosable-npcomplete} also holds if the number of FPQ-trees per vertex is bounded by a constant larger than six, indeed it is possible to associate each edge of the given triconnected cubic graph $G$ with a suitable number of parallel edges and each vertex of $G$ with a suitable number of FPQ-trees.

We now prove that \pqchoosable parameterized by treewidth is W[1]-hard.

	\begin{theorem}\label{th:w1}
	\pqchoosable parameterized by treewidth is W[1]-hard. It remains W[1]-hard even when the FPQ-trees have only P-nodes.
	\end{theorem}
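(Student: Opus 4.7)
The plan is to reduce from a W[1]-hard problem parameterized by treewidth, using a naturally planar source problem to avoid a treewidth-increasing planarization step. A clean choice is \emph{Grid Tiling} parameterized by the grid size $k$ (W[1]-hard by Marx); alternatively, one could use \emph{List Coloring} parameterized by treewidth (W[1]-hard by Fellows, Fomin, Lokshtanov, Rosamond, Saurabh, Szeider, and Thomassen) restricted to a suitable planar class.

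Given a $k \times k$ Grid Tiling instance, I would build an FPQ-choosable graph $(G', D')$ whose underlying graph $G'$ is a planar ``blow-up'' of the grid. Each cell is represented by a vertex $v'$ of $G'$, and $D'(v')$ contains one FPQ-tree per allowed tile at that cell. Each horizontal or vertical adjacency between two cells is replaced by a parallel-edge bundle connecting the corresponding vertices of $G'$, together with FPQ-trees whose configurations generalize the red/green/blue configurations of Theorem~\ref{th:pqchoosable-npcomplete}. The FPQ-trees are designed so that a pair of FPQ-trees chosen at the two endpoints of a bundle admits a planar embedding of that bundle exactly when the underlying tile choices agree on the shared coordinate enforced by that adjacency. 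The resulting graph $G'$ is planar and, since the grid has treewidth $O(k)$ and all gadgets are local and of bounded degree, its treewidth is bounded by a function of $k$, yielding a valid fpt-reduction.

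The main obstacle is the planarity requirement: since \pqchoosable is trivial on non-planar inputs, the reduction must produce a planar $G'$; using a grid-structured source problem side-steps this issue at the cost of slightly more involved pairwise gadgets, whereas starting from a general source graph would require a crossing-removal step that typically blows up treewidth. The correctness of the reduction follows the same pattern as in Theorem~\ref{th:pqchoosable-npcomplete}: a compatible assignment of FPQ-trees forces matching configurations across each bundle, which in turn corresponds to a consistent selection of tiles satisfying all coordinate-agreement constraints; conversely, any valid tiling yields a compatible assignment by choosing the FPQ-tree corresponding to the selected tile at each cell.

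Finally, the P-node-only variant is obtained by the same sextuplet-of-edges transformation used at the end of the proof of Theorem~\ref{th:pqchoosable-npcomplete}: each bundle of three parallel edges is doubled to six parallel edges, and each Q-node that encoded a configuration is simulated by P-nodes each of which forces a specific pair of leaves to appear consecutively. This preserves both planarity and the treewidth bound while removing all Q-nodes from the FPQ-trees, so the W[1]-hardness extends to this restricted setting.
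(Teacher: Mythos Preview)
Your primary route via \textsc{Grid Tiling} differs from the paper, which reduces directly from \textsc{List Coloring} on planar graphs parameterized by treewidth (the alternative you mention in passing). From a planar instance $(G,L)$ the paper replaces each edge $(u,v)$ by one triplet of parallel edges for every colour in $L(u)\cap L(v)$ and equips each vertex $v'$ with $|L(v)|$ FPQ-trees, one per colour: in $T_c(v')$ the triplet labelled~$c$ is governed by a Q-node whenever $c$ is shared with the corresponding neighbour, and by a P-node otherwise, with the two endpoint Q-node orders on a shared-colour triplet chosen to clash. This is an \emph{inequality} gadget (planar bundle iff the endpoints pick different colours), matching the list-colouring constraint directly; since only parallel edges are added, the treewidth of $G'$ equals that of $G$.

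Your \textsc{Grid Tiling} route instead needs an \emph{equality} gadget (planar bundle iff the endpoints agree on a coordinate in~$[n]$). The red/green/blue configurations of Theorem~\ref{th:pqchoosable-npcomplete} do realise equality, but only for three values; to handle $n$ values you must enlarge each bundle so that the number of pairwise non-equivalent Q-node orders exceeds~$n$, so your ``bounded degree'' remark is inaccurate (though treewidth is still $O(k)$, as parallel edges do not change it). The same caveat applies to the P-node-only step: the sextuplet trick encodes exactly three mutually incompatible configurations, and for $n$ values you need $2m$ parallel edges with enough distinct perfect matchings on $[2m]$, together with the observation that reversing a linear order preserves its induced matching, so distinct matchings are never simultaneously realisable. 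All of this can be carried out, so your approach is viable; the paper's choice of \textsc{List Coloring} simply sidesteps these generalisations and lands immediately on the inequality gadget that the colouring constraint suggests.
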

	\begin{proof}
		We use a parameterized reduction from the \emph{list coloring} problem, which is defined as follows. Given a graph $G=(V,E)$ and a set $L$ containing a list $L(v)$ of colors for each vertex $v\in V$, is there a proper vertex coloring with $c(v)\in L(v)$ for each $v$? We denote as $c(v)$ the color of $v$ in a proper vertex coloring. The list coloring problem parameterized by treewidth is known to be W[1]-hard even for planar graphs~\cite[Theorem~14.29]{cfklmpps-pa-15}.
		
		For any given instance $(G,L)$ of list coloring such that $G$ is a planar graph whose treewidth is at most $t$, we construct an FPQ-choosable graph $(G',D')$ that is FPQ-choosable planar if and only if $(G,L)$ is a \emph{yes} instance of list coloring. Note that $(G,L)$ is a \emph{yes} instance of list coloring if and only if it is a \emph{yes} instance for any planar embedding of $G$, hence we perform the reduction to \pqchoosable starting from any instance $(G,L)$ with an arbitrary planar embedding of $G$.
		Starting from a planar embedding of graph $G$, we construct a planarly embedded multi-graph $G'$ by replacing each edge of $G$ with bundles of edges as follows. Also refer to Figure~\ref{fi:param-reduction}.
		
		
		\begin{figure}[tbp]
			\centering
			\subfigure[]{\includegraphics[width=.55\textwidth,page=6]{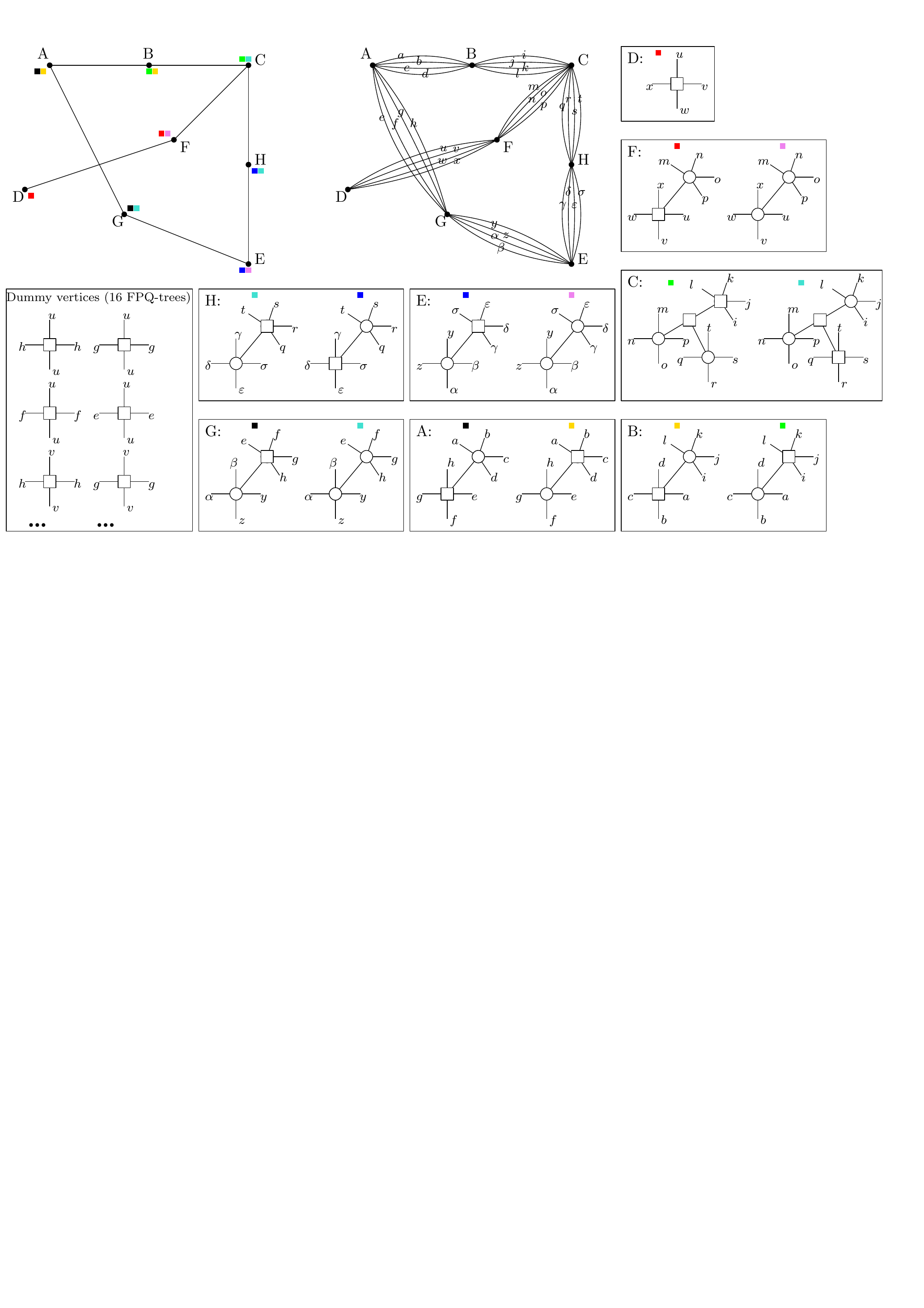}\label{fi:param-a}}
			\hfil
			\subfigure[]{\includegraphics[width=.48\textwidth,page=7]{param}\label{fi:param-b}}
			\hfil
			\subfigure[]{\includegraphics[width=1\textwidth,page=8]{param}\label{fi:param-c}}
			\caption{Illustration of the reduction from list coloring to \pqchoosable. (a) An instance $(G,L)$ of list coloring. The circled colors indicate a valid coloring of $G$. (b) The corresponding FPQ-choosable graph $(G',D')$; some FPQ-trees of $D'$ are shown in (c).}
			\label{fi:param-reduction}
		\end{figure}
		
		Each vertex $v$ of $G$ becomes vertex $v'$ in $G'$, and each edge $e=(u,v)$ of $G$ is split into three parallel edges $e'_1, e'_2$, and $e'_3$ in $G'$. Let $h=|L(u)\cap L(v)|$ in $(G,L)$. If $h=0$, the triplet $e'_1, e'_2,e'_3$ is labeled with no color. If $h=1$ the triplet $e'_1, e'_2,e'_3$ is labeled with the color in common between $u$ and $v$. If $h> 1$, we create $3(h-1)$ additional parallel edges in $G'$ between $u'$ and $v'$, and we label each of the $3h$ triplets of edges with one of the colors shared by $u$ and $v$.
		Observe that $G'$ is a planar multi-graph with a given planar embedding and its treewidth is the same as the treewidth of $G$.
		
		We construct the set $D'$ of FPQ-trees associated with the vertices of $G'$ starting from the set $L$ of colors associated with the vertices of $G$ as follows. Let $v$ be a vertex of $G$, let $L(v)$ be its color list, and let $v'$ be the corresponding vertex in $G'$. Denote as $deg(v')$ the degree of $v'$ in $G'$. We equip $v'$ in $G'$ with $|L(v)|$ FPQ-trees, each encoding a color of $L(v)$ in $(G,L)$; we denote as $T_c(v')$ the FPQ-tree associated with $v'$ in $G'$ and encoding color $c\in L(v)$.
		Let $k=deg(v')/3$ (note that $k$ is a positive integer since $deg(v')\mod 3 = 0$).
		If $k = 1$, $deg(v)=1$ in $G$, and there is at most one color $c$ such that $c\in L(u) \cap L(v)$, where $u$ is the neighbor of $v$ in $G$. Each FPQ-tree $T_c(v')$ consists of a node $\rho$ whose leaves represent the triplet of edges incident to $v'$. Node $\rho$ is a Q-node if $v$ shares color $c$ with its neighbor, otherwise $\rho$ is a P-node (observe that there are at least $|L(v)|-1$ FPQ-trees associated with $v'$ with the same set of nodes).
		For example, in Figure~\ref{fi:param-c}, for vertex $z'$ we have $k=1$; the triplet of edges incident to $z'$ is labeled with color $3$, and the FPQ-tree $T_{3}(z')$ consists of a node $\rho$ with three leaves. Node $\rho$ is a Q-node because color $3\in L(z) \cap L(\tilde{z})$ in $G$.
		
		If $k > 1$, each FPQ-tree of $D'(v')$ consists of a P-node $\rho$ connected to $k$ nodes $\nu_1,\dots,\nu_{k}$ having three leaves each.
		The leaves of each $\nu_i$ ($1 \le i \le k$) represent a triplet $e'_1, e'_2, e'_3$ of edges connecting $v'$ to some other vertex $u'$ of $G'$; this triplet either encodes a color in $L(v) \cap L(u)$ or it encodes no color if $L(v) \cap L(u)=\emptyset$.
		Also, if the color $c$ associated with $T_c(v')$ is such that $c\in L(v) \cap L(u)$, node $\nu_i$ is a Q-node; it is a P-node otherwise.
		For example, in Figure~\ref{fi:param-c} we have $k>1$ for vertex $v'$. The FPQ-tree $T_{1}(v')$ encodes the color $1$ of $L(v)$; $v'$ has three triplets of incident edges and node $\rho$ of $T_{1}(v')$ has three children whose leaves represent these three triplets. Since color $1$ belongs to both $L(u)$ and $L(v)$ in Figure~\ref{fi:param-a}, the node $\nu_1$ of $T_{1}(v')$ whose leaves represent the triplet of edges $p,q,r$ of $G'$ is a Q-node. Conversely, the nodes $\nu_2$ and $\nu_3$ of $T_1(v')$ associated with the triplets labeled with colors $2$ and $3$ of $L(v)$ are P-nodes.
		
		Note that $|D'(v')|=|L(v)|$ for each vertex $v$ of $G$ and each vertex $v'$ of $G'$, thus we have that the size of $(G',D')$ is polynomial in the size of $(G,L)$.		
		We now prove that $(G,L)$ admits a proper vertex coloring with $c(v)\in L(v)$ for each $v$ if and only if $(G',D')$ is FPQ-choosable planar.
		
		Suppose first that $(G,L)$ admits a proper vertex coloring. Let $v$ be any vertex of $G$, let $c(v)$ be the chosen color for $v$, and let $v'$ be the image of $v$ in $(G',D')$. Assignment $A$ for $(G',D')$ is defined such that $A(v')=T_{c(v)}(v')$. We show that there exists a planar embedding of $G'$ that is consistent with $A$. Since any pair of adjacent vertices $u$ and $v$ in $G$ is such that $c(u) \neq c(v)$, the two FPQ-trees $A(u')=T_{c(u)}(u')$ and $A(v')=T_{c(v)}(v')$  contain pairs of nodes whose leaves correspond to triplets of edges connecting $u'$ and $v'$. Each of these triplets are connected to a P-node either in $A(u')$ or in $A(v')$ (or in both), hence they can be ordered so to avoid edge crossings in $G'$. It follows that if $(G,L)$ is a \emph{yes} instance of list coloring, then $(G',D')$ is FPQ-choosable planar.
		
		Suppose now that $(G',D')$ is FPQ-choosable planar. There exists an assignment $A$ that defines the cyclic order of the edges incident to each vertex in a planar embedding of $G'$.
		Let $u'$ and $v'$ be any two adjacent vertices of $G'$. FPQ-trees $A(v')=T_{c_1}(v')$ and $A(u')=T_{c_2}(u')$ are such that the edges represented by their leaves can be drawn in $G'$ without crossings, hence they correspond to different colors $c_1$ and $c_2$ for $v$ and $u$, and thus $c(v) \neq c(u)$.
		It follows that if $(G',D')$ is FPQ-choosable planar, then $(G,L)$ is a \emph{yes} instance of list coloring.
		It follows that \pqchoosable parameterized by treewidth is W[1]-hard.
		
		
		The proof that the problem remains W[1]-hard even if the FPQ-trees associated with the vertices have only P-nodes is a slight variant of the argument above. Namely, we construct an FPQ-choosable graph $(G',D')$ such that each vertex $v$ of $G$ becomes vertex $v'$ in $G'$, and each edge $e=(u,v)$ of $G$ is split into $6$-tuples of parallel edges in $G'$.
		If $h= |L(u) \cap L(v)| =0$, the six parallel edges between $u$ and $v$ are labeled with no color. If $h=1$ the $6$-tuple of parallel edges is labeled with the color in common between $u$ and $v$.
		If $h>1$, we create $6(h-1)$ additional parallel edges between $u'$ and $v'$. Similarly tothe previous case, we label each of these $6$-tuples of edges with one of the colors in $L(u) \cap L(v)$.
		Each vertex $v'$ in $G'$ is equipped with $|L(v)|$ FPQ-trees, each encoding a color of $L(v)$.
		If $k=deg(v')/6 = 1$, each FPQ-tree $T_c(v')$ consists of a P-node $\rho$ connected to three P-nodes whose leaves represent the six edges incident to $v'$; see for example Figure~\ref{fi:pnodes-c}.
		If $k> 1$ each FPQ-tree associated with vertex $v'$ of $G'$ consists of a P-node $\rho$ connected to $k$ P-nodes $\nu_1,\dots,\nu_{k}$, each of which is connected to three P-nodes. Each of these three P-nodes has two leaves; see for example Figure~\ref{fi:pnodes-d}.		
		If $v$ shares a color $c$ with an adjacent vertex $u$, the FPQ-tree $T_c(v')$ contains a P-node $\nu_l$ ($1 \leq l \leq k$) whose leaves represent the $6$-tuple of edges connecting $v'$ with $u'$ that is labeled with color $c$. Each of these three pairs of leaves is connected to a P-node, which enforces each pair of leaves to appear consecutively. In particular, in $T_c(v')$ the two leaves $(e_i^1,e_i^2)$ must be consecutive, as well as the leaves $(e_i^3,e_i^4)$, and the leaves $(e_i^5,e_i^6)$ ($1\le i \le deg(v)$), while in $T_c(u')$ the two leaves $(e_j^5,e_j^2)$ must be consecutive, as well as the leaves $(e_j^3,e_j^6)$, and the leaves $(e_j^1,e_j^4)$ ($1\le j \le deg(u)$). This guarantees that two adjacent vertices $v'$ and $u'$ of $G'$ are such that if their FPQ-trees encode the same color $c_1$, the edges incident to $v'$ and the edges incident to $u'$ must respect cyclic orders that do not allow to connect them without edge crossings.
		On the other hand, in an FPQ-tree $T_{c_2}(u')$ encoding a color $c_2$ different from $c_1$, the pairs of leaves that must be consecutive are the same as the ones of $T_{c_1}(v')$, which allows to connect the corresponding edges of $G'$ without edge crossings.
		By this argument, we can conclude that the \pqchoosable is W[1]-hard even if the FPQ-trees associated with the vertices have only P-nodes.
		\qed
		
		
		\begin{figure}[tbp]
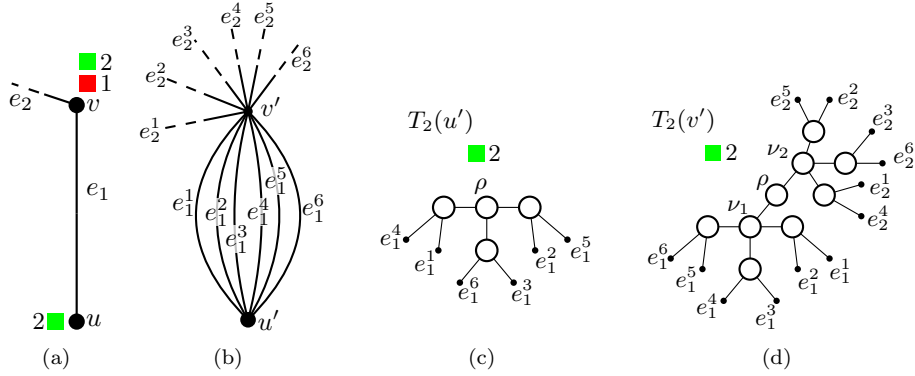

			\centering
			\subfigure[]{\includegraphics[width=.12\textwidth,page=10]{param}\label{fi:pnodes-a}}
			\hfill
			\subfigure[]{\includegraphics[width=.21\textwidth,page=11]{param}\label{fi:pnodes-b}}
			\hfill
			\subfigure[]{\includegraphics[width=.3\textwidth,page=5]{param}\label{fi:pnodes-c}}
			\hfill			\subfigure[]{\includegraphics[width=.3\textwidth,page=9]{param}\label{fi:pnodes-d}}
			\caption{(a) An edge $(u,v)$ of an instance of list coloring. (b) The corresponding FPQ-choosable graph. (c) The FPQ-tree $T_2(u')$ of $D'(u')$ associated with color $2$. (d) The FPQ-tree $T_2(v')$ of $D'(v')$ associated with color $2$. Note that they contain only P-nodes.}
			\label{fi:pnodes}
		\end{figure}
	\end{proof}

	The results of this section imply the following.

	\begin{corollary}\label{co:no-fpt}
		\pqchoosable is not fixed-parameter tractable if parameterized by treewidth only or by the number of FPQ-trees per vertex only. It remains fixed-parameter tractable even if the FPQ-trees consist of P-nodes.
	\end{corollary}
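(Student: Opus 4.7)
The plan is to derive the corollary as an immediate consequence of the two theorems that precede it, so no new construction is needed; the proof is a short deduction from standard complexity-theoretic assumptions. The statement has two parts, and each maps directly onto one of the hardness results.

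First, consider the parameterization by the number of FPQ-trees per vertex alone. By Theorem~\ref{th:pqchoosable-npcomplete}, \pqchoosable is NP-complete even when this number is bounded by a (small) constant. A fixed-parameter tractable algorithm for a parameter that can be fixed to a constant would yield a polynomial-time algorithm for the constant-bounded case, contradicting NP-completeness unless $\mathrm{P}=\mathrm{NP}$. Hence the problem cannot be FPT in this parameter. Second, consider the parameterization by treewidth alone. By Theorem~\ref{th:w1}, \pqchoosable is W[1]-hard in this parameter, so a fixed-parameter tractable algorithm would imply $\mathrm{FPT}=\mathrm{W}[1]$, which is widely conjectured to fail. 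Hence the problem cannot be FPT in treewidth alone either.

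Finally, the ``even if the FPQ-trees consist of P-nodes'' clause is obtained for free: both Theorem~\ref{th:pqchoosable-npcomplete} and Theorem~\ref{th:w1} explicitly state that their hardness reductions can be carried out using only P-nodes in the FPQ-trees associated with the vertices of the constructed instance. Thus the same non-FPT conclusion holds under this additional structural restriction, completing the proof.

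The only potential pitfall is to make sure the logical direction is stated correctly: the corollary summarizes negative (hardness) results, so the final sentence should be read as ``the hardness persists even when the FPQ-trees contain only P-nodes.'' There is no real mathematical obstacle, since the work has already been done in the two preceding theorems; the corollary merely collects their implications in the language of parameterized complexity.
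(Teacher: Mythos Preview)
Your proposal is correct and matches the paper's approach exactly: the paper presents this corollary with no separate proof, merely introducing it with ``The results of this section imply the following,'' so it is intended as the immediate deduction from Theorems~\ref{th:pqchoosable-npcomplete} and~\ref{th:w1} that you spell out. Your observation about the final clause is also apt: the wording in the statement is infelicitous, and your reading (the \emph{hardness} persists for P-node-only trees) is the intended one, as confirmed by the fact that both preceding theorems explicitly establish their hardness for this restricted case.
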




\section{Fixed Parameter Tractability of FPQ-choosable Planarity Testing}\label{se:pqchoosable-testing-fpt}
				
This section is organized as follows. We first introduce the notions of boundaries and of extensible orders, and state two technical lemmas. Next, we define the concepts of pertinent FPQ-tree, skeletal FPQ-tree and admissible tuple, which are fundamental in the algorithm description. Finally, we present a polynomial-time testing algorithm for FPQ-choosable graphs having bounded branchwidth and such that the number of FPQ-trees associated with each vertex is bounded by a constant.
Note that, if a graph has bounded branchwidth $b$ it has treewidth at most $\big \lfloor {\frac{3}{2} b} \big \rfloor -1$~\cite{rs-gm-91}.

\smallskip	
\noindent \textbf{Boundaries and Extensible Orders:} Let $T$ be an FPQ-tree, let $\yield(T)$ denote the set of its leaves, and let $L$ be a proper subset of $\yield(T)$. We denote by $\sigma$ a cyclic order of the leaves of an FPQ-tree, and we say that $\sigma \in \consistent(T)$ if the FPQ-tree $T$ represents $\sigma$. We say that $L$ is a \emph{consecutive set} if the leaves in $L$ are consecutive in every cyclic order represented by $T$. Let $e$ be an edge of $T$, and let $T'$ and $T''$ be the two subtrees obtained by removing $e$ from $T$. If either $\yield(T')$ or $\yield(T'')$ are a subset of a consecutive set $L$, then we say that $e$ is a \emph{split edge of $L$}. The subtree that contains the leaves in $L$ is the \emph{split subtree} of $e$.
A split edge $e$ is \emph{maximal} if there exists no split edge $e'$ such that the split subtree of $e'$ contains $e$.

	\begin{lemma}\label{le:boundary}
		Let $T$ be an FPQ-tree, let $L$ be a consecutive proper subset of $\yield(T)$, and let $S$ be the set of maximal split edges of $L$. Then either $|S|=1$, or $|S|>1$ and there exists a Q-node (or an F-node) $\chi$ of $T$ such that $\chi$ has degree at least $|S|+2$ and the elements of $S$ appear consecutive around $\chi$.
	\end{lemma}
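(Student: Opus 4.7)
If $|S|=1$ the statement is immediate, so I focus on $|S|\ge 2$. For each $e\in S$, let $u_e$ denote the endpoint of $e$ not in the split subtree~$T_e$. My plan is to show that all $u_e$ coincide at a single node~$\chi$, and that this $\chi$ inherits the required properties from the consecutiveness of~$L$.

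A preliminary observation is that distinct maximal split subtrees are vertex-disjoint: otherwise their union would form a connected subtree of $T$ whose yield lies in $L$, giving a larger split subtree containing each and contradicting maximality. I would then establish $u_{e_1}=u_{e_2}$ for arbitrary $e_1,e_2\in S$ as follows. Suppose not, and let $f$ be the edge at $u_{e_1}$ on the path in $T$ to~$u_{e_2}$. By maximality, $f$ is not a split edge, so both sides of $f$ carry both $L$-leaves and non-$L$-leaves. Since internal FPQ-tree nodes have degree at least three, $u_{e_1}$ has another incident edge $h\notin\{e_1,f\}$; moreover, $h$ must have a non-$L$ leaf on its far side, for otherwise the whole $u_{e_1}$-side neighborhood of $f$ would be $L$-full, turning $f$ into a split edge whose split subtree strictly contains~$T_{e_1}$. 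If $u_{e_1}$ is a P-node, a permutation placing $h$ between $e_1$ and $f$ inserts non-$L$ leaves between the $L$-leaves of $T_{e_1}$ and the $L$-leaves reached through $f$, violating the consecutiveness of~$L$. If $u_{e_1}$ is a Q- or F-node, its rigid cyclic order still forces the $L$-bearing incident edges into a contiguous block; propagating this block constraint along the path to~$u_{e_2}$ and combining with a symmetric argument at the far end yields the same contradiction.

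Denote the common endpoint by $\chi$. First, $\chi$ is not a P-node: with $|S|\ge 2$ incident $S$-edges plus the additional non-$S$ incident edges guaranteed next, some P-node permutation would interleave non-$L$ content between $L$-leaves, violating $L$-consecutiveness. Next, $\deg(\chi)\ge|S|+2$: otherwise $\chi$ would have at most one non-$S$ incident edge $g$, and the $\chi$-side of $g$---consisting of $\chi$ together with the $|S|$ split subtrees---would have yield inside $L$, making $g$ itself a split edge with split subtree strictly containing each $T_e$, contradicting maximality. Finally, since $u_e=\chi$ for every $e\in S$, no non-$S$ incident edge of $\chi$ may carry $L$-leaves on its far side (such an $L$-leaf would belong to some maximal split subtree whose outer endpoint would then lie strictly inside the far side rather than at~$\chi$), so the $L$-bearing incident edges of $\chi$ are precisely those of $S$; combined with the rigid cyclic order of the Q/F-node~$\chi$, the only way to keep $L$ consecutive is for these $|S|$ edges to occupy a consecutive block around~$\chi$.

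The main technical hurdle I anticipate is the common-endpoint claim in the case when all nodes along the path from $u_{e_1}$ to $u_{e_2}$ are Q- or F-nodes, where permutation freedom is unavailable. Here I expect the argument to require carefully propagating the forced position of the $L$-arc through each Q/F-node along the path, ultimately exposing an inconsistency between the local cyclic orders and the global consecutiveness of~$L$.
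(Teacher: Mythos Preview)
Your approach is essentially the paper's: assume two maximal split edges have distinct outer endpoints and derive a contradiction, then analyse the common node~$\chi$. The second half of your argument (that $\chi$ is a Q/F-node of degree at least $|S|+2$ with the edges of $S$ contiguous) is correct and in fact more carefully justified than the paper's terse version.

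The gap you flag---the case where $u_{e_1}$ is a Q- or F-node---is real in your formulation, but it stems from an asymmetry you introduced. You locate a non-split edge $h$ only at $u_{e_1}$ and then try to push the resulting constraint along the path toward $u_{e_2}$. The paper instead argues \emph{symmetrically}: by the same reasoning that produces your $h$ at $u_{e_1}$, there is also a non-split edge $f'$ incident to $u_{e_2}$ and different from the first edge on the path back toward $u_{e_1}$. With both in hand, the edge $f$ on the path has, on \emph{each} of its two sides, both leaves in $L$ (from $T_{e_1}$, respectively $T_{e_2}$) and leaves outside $L$ (from the far side of $h$, respectively $f'$); the paper derives the contradiction directly from this configuration, with no case split on node types and no propagation along the path. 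Your anticipated ``main technical hurdle'' therefore disappears once you exploit the symmetry between the two endpoints rather than working outward from one of them.

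Two smaller points. First, your sentence ``$h$ must have a non-$L$ leaf on its far side, for otherwise the whole $u_{e_1}$-side neighbourhood of $f$ would be $L$-full'' asserts this for an \emph{arbitrary} third edge~$h$; the justification only yields it for \emph{some} such $h$ (if every edge at $u_{e_1}$ other than $f$ had an $L$-full far side, then $f$ would be a split edge strictly dominating $e_1$). This is exactly what the paper states and is all you need. Second, your preliminary observation that distinct maximal split subtrees are vertex-disjoint is correct but not used anywhere downstream; the paper's proof does without it.
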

	\begin{proof}
		\begin{figure}[tbp]
			\centering
			{\includegraphics[width=.4\textwidth,page=1]{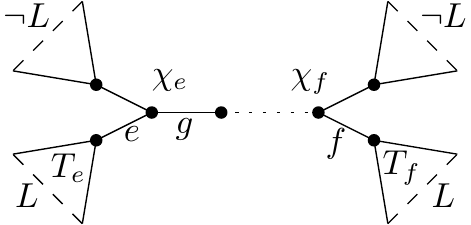}}
			\caption{Illustration for the proof of Lemma~\ref{le:boundary}.}
			\label{fi:boundary}
		\end{figure}
		Assume that $|S| > 1$. Let $e$ and $f$ be two maximal split edges
		of $L$, and let $T_e$ and $T_f$ be the split subtrees of $e$
		and $f$, respectively. Let further $\chi_e$ denote the
		endpoint of $e$ that is not in $T_e$. The endpoint $\chi_f$
		is defined likewise. Refer to Figure~\ref{fi:boundary} for an illustration.		
		
		Suppose for the sake of contradiction that~$\chi_e$
		and~$\chi_f$ are distinct. Let~$g$ denote the
		first edge on the path from~$\chi_e$ to~$\chi_f$.  By the
		maximality of~$e$ and~$f$, the edge $g$ is not a split edge.
		It follows that there is an edge $e'$ incident to~$\chi_e$
		that is different from~$g$ and that is not a split edge.
		Likewise, we find an edge $f'$ incident to~$\chi_f$ that is
		different from the first edge on the path from~$\chi_f$
		to~$\chi_e$ and that is not a split edge.  But then~$g$ is
		an edge of a tree $T$ such that one of the two subtrees it
		separates has leaves in~$L$ and leaves that are not in~$L$.
		It follows that~$L$ is not a consecutive set. This is a
		contradiction to the assumption that~$\chi_e$ and~$\chi_f$
		are distinct.
		
		It follows that the edges in $S$ are all incident to a
		single vertex~$\chi$. If~$\chi$ has degree~$|S|$,
		then~$L$ is not a proper subset of the leaves, and if it has
		degree~$|S|+1$, then also its remaining edge is a split
		edge, which contradicts the maximality of the split edges
		in~$S$.  Hence $\deg(\chi) \ge |S|+2$.  If~$\chi$ were a
		P-node, this would contradict the assumption that~$L$ is a
		consecutive set.
		\qed
	\end{proof}

	If $|S|=1$, the split edge in $S$ is called the \emph{boundary of L}. If $|S|>1$, the Q-node (or the F-node) $\chi$ defined in the statement of Lemma~\ref{le:boundary} is the \emph{boundary of $L$}.
	Since F-nodes are a more constrained version of Q-nodes, when we refer to boundary Q-nodes we also take into account the case in which they are F-nodes.
	Figure~\ref{fi:boundary-Snode-a} shows an FPQ-choosable graph $(G,D)$ and
	two FPQ-trees $T_u \in D(u)$ and $T_v \in D(v)$. The three red edges $b$, $c$, and $d$ of $G$ define a consecutive set $L_u$ in $T_u$; the edges $e$ and $f$ define a consecutive set $L_v$ in $T_v$. The boundary of $L_u$ in $T_u$ is a Q-node, while the boundary of $L_u$ in $T_u$ is an edge.
	We denote as $\mathcal{B}(L)$ the boundary of a set of leaves $L$. If $\mathcal{B}(L)$ is a Q-node, we associate $\mathcal{B}(L)$ with a default orientation (i.e., a flip) that arbitrarily defines one of the two possible permutations of its children. We call this default orientation the \emph{clockwise orientation} of $\mathcal{B}(L)$. The other possible permutation of the children of $\mathcal{B}(L)$ corresponds to the \emph{counter-clockwise orientation}.

	Let $L'=L \cup \{\ell\}$, where $\ell$ is a new element. Let $\sigma \in \consistent(T)$, and let $\sigma|_{L'}$ be a cyclic order obtained from $\sigma$ by replacing the elements of the consecutive set $\yield(T) \setminus L$ by the single element $\ell$. We say that a cyclic order $\sigma'$ of $L'$ is \emph{extensible} if there exists a cyclic order $\sigma \in \consistent(T)$ with $\sigma|_{L'}=\sigma'$. In this case, we say that $\sigma$ is an \emph{extension of $\sigma'$}. 
	Note that if the boundary of $L$ is a Q-node $\chi$, then any two extensions of $\sigma'$ induce the same clockwise or counter-clockwise orientation of the edges incident to $\chi$. An extensible order $\sigma$ is \emph{clockwise} if the orientation of $\chi$ is clockwise; $\sigma$ is \emph{counter-clockwise} otherwise.
	If the boundary of $L$ is an edge, we consider any extensible order as both clockwise and counter-clockwise.
	
	\begin{figure}[tb]
		\centering
		\subfigure[]{\includegraphics[width=.7\textwidth,page=3]{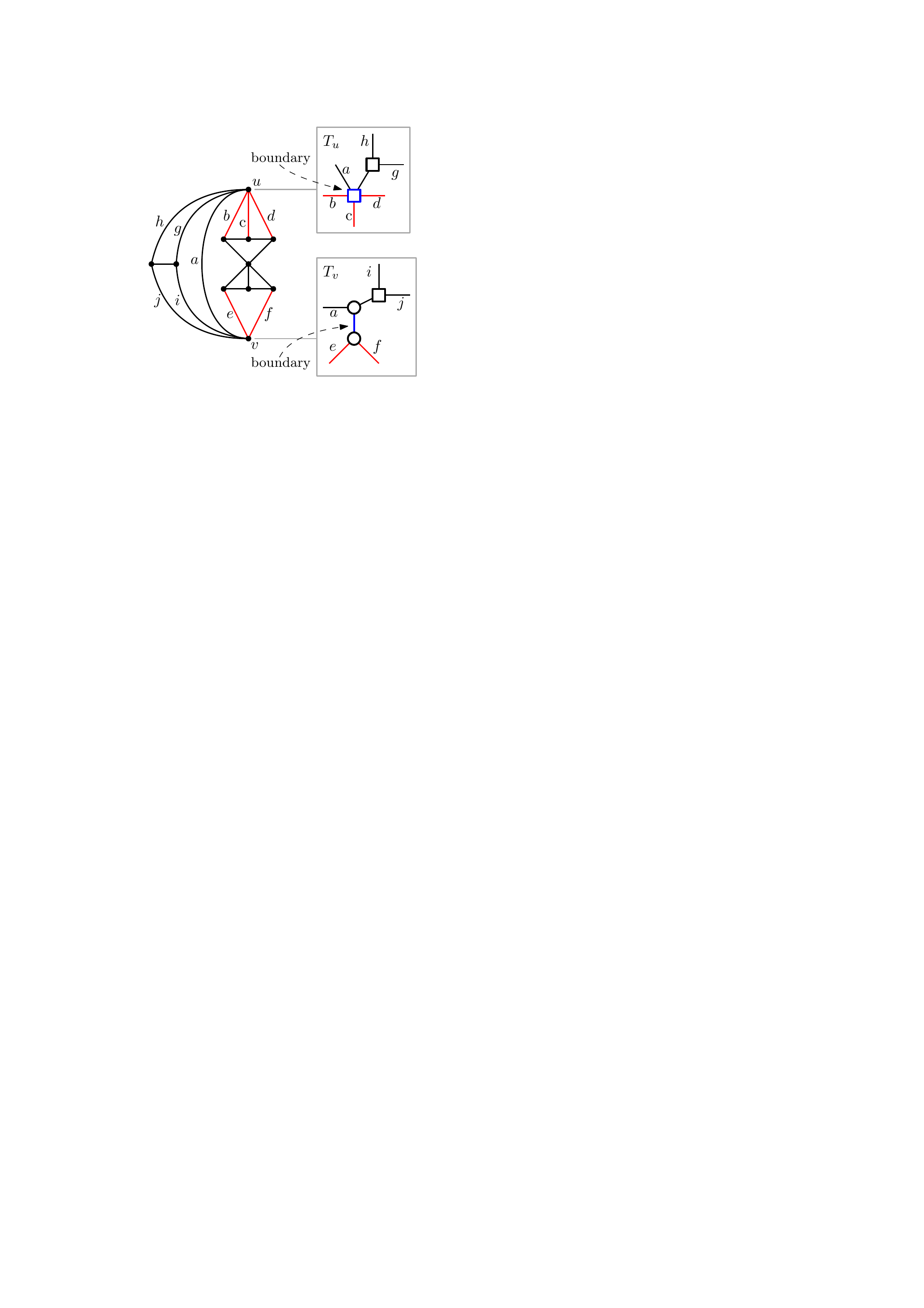}\label{fi:boundary-Snode-a}}
		\\
		\subfigure[]{\includegraphics[width=.7\textwidth,page=4]{boundary-Snode}\label{fi:boundary-Snode-b}}
		\hfil
		\subfigure[]{\includegraphics[width=.28\textwidth,page=5]{boundary-Snode}\label{fi:boundary-Snode-c}}
		\caption{(a) Two different types of boundaries: A boundary Q-node in $T_u$ and a boundary edge in $T_v$. (b) The pertinent FPQ-trees $\pert_{\mu}(T_u)$ of $T_u$ and $\pert_{\mu}(T_v)$ of $T_v$. (c) The skeletal FPQ-trees $\skeletal_{\mu}(T_u)$ of $\pert_{\mu}(T_u)$ and $\skeletal_{\mu}(T_v)$ of $\pert_{\mu}(T_v)$.}
		\label{fi:boundary-Snode}
	\end{figure}
	
	Let $L$ and $\hat{L}$ be two disjoint consecutive sets of leaves that have the same boundary Q-node $\chi$ in $T$. 
	Let $\sigma$ and $\hat{\sigma}$ be two extensible orders of $L$ and $\hat{L}$, respectively. We say that $\sigma$ and $\hat{\sigma}$ are \emph{incompatible} if one of them is clockwise and the other one is counter-clockwise.

	\begin{lemma}\label{le:incompatibility}
		Let $T$ be an FPQ-tree, let $L_1 \cup \dots \cup L_k$ be a partition of $\yield(T)$ into consecutive sets, and let $\sigma_1, \dots, \sigma_k$ be extensible orders of $L_1, \dots, L_k$. There exists an order $\Sigma$ of $\yield(T)$ represented by $T$ such that $\Sigma|_{L_i}=\sigma_i$ if and only if no pair $\sigma_i,\sigma_j$ $(1 \le i, j \le k)$ is incompatible.
	\end{lemma}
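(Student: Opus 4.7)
The plan is to prove the two directions of the equivalence separately.

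For necessity (the ``only if'' direction), suppose $\Sigma \in \consistent(T)$ satisfies $\Sigma|_{L_i} = \sigma_i$ for every $i$. Then $\Sigma$ induces a unique orientation on every Q-node of $T$. For any pair $L_i, L_j$ sharing a boundary Q-node $\chi$, both $\sigma_i$ and $\sigma_j$ must induce on $\chi$ the orientation determined by $\Sigma$, so they select the same orientation of $\chi$ and are therefore not incompatible. If the shared boundary is an edge, any extensible order is by definition simultaneously clockwise and counter-clockwise, so no incompatibility can arise either.

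For sufficiency (the ``if'' direction), I would proceed by induction on the number of internal nodes of $T$, rooted at an arbitrary internal node $\rho$. In the base case $\rho$ is the unique internal node. If $\rho$ is a P-node, Lemma~\ref{le:boundary} forces each $L_i$ to be a single leaf, and $\Sigma$ is any permutation of the leaves realizing the trivial orders $\sigma_i$. If $\rho$ is a Q- or F-node, each $\sigma_i$ pins down the consecutive placement of its leaves among the children of $\rho$; the compatibility hypothesis ensures that the induced orientations of $\rho$ agree, and $\Sigma$ is read off directly. For the inductive step I distinguish two cases according to the type of $\rho$. If $\rho$ is a P-node, Lemma~\ref{le:boundary} forbids $\rho$ from being a boundary, hence each $L_i$ lies entirely within one child subtree of $\rho$; grouping the $L_i$'s by child subtree, applying the inductive hypothesis to each child subtree, and concatenating the resulting partial orders in any permutation around $\rho$ yields $\Sigma$. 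If $\rho$ is a Q- or F-node, I would first fix its orientation: by compatibility all $L_i$'s with $\mathcal{B}(L_i) = \rho$ agree on it (an F-node has no freedom of choice), and I would then recurse into each child subtree and glue.

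The main obstacle lies in the Q-/F-node inductive step, because an $L_i$ with $\mathcal{B}(L_i) = \rho$ spans several children of $\rho$ and is not localized in a single child subtree. The key observation that keeps the induction clean is that when $\mathcal{B}(L_i) = \rho$, every child subtree $T_c$ of $\rho$ satisfies either $\yield(T_c) \subseteq L_i$ or $\yield(T_c) \cap L_i = \emptyset$, since otherwise the maximal split edges of $L_i$ would not all be incident to $\rho$ as described by Lemma~\ref{le:boundary}. Consequently, for each child $c$ the collection $\{L_j : L_j \subseteq \yield(T_c)\}$ partitions $\yield(T_c)$ cleanly and no $L_j$ needs to be split further. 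One then verifies that the restrictions of the $\sigma_j$'s remain extensible in each $T_c$ and that the compatibility hypothesis carries over (any boundary Q-node shared inside $T_c$ by two of these restrictions already agreed in $T$); the inductive hypothesis applied to each $T_c$ produces partial orders that, combined with the chosen orientation of $\rho$, yield an order in $\consistent(T)$ restricting to every $\sigma_i$.
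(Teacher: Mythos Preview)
Your only-if direction is fine, but the if-direction has a real gap in how you localise the $L_i$ to child subtrees of the root. The inference ``$\rho$ is a P-node, hence by Lemma~\ref{le:boundary} each $L_i$ lies entirely within one child subtree'' is false already in the base case: take $T$ to be a single P-node with three leaves $a,b,c$ and the partition $L_1=\{a\}$, $L_2=\{b,c\}$. Both parts are consecutive (in any cyclic order of three elements every pair is adjacent), $L_2$ is not a single leaf, and its boundary is the \emph{edge} $(\rho,a)$, not $\rho$ itself. More generally, whenever $\yield(T)\setminus L_i$ sits inside a single child subtree $T_{c}$, the set $L_i$ spans all the remaining children while its boundary is an edge or a node inside $T_{c}$; Lemma~\ref{le:boundary} only rules out a P-node as the boundary \emph{node}, it does not confine $L_i$ to one child. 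The same phenomenon breaks your Q-node step. Let $\rho$ be a Q-node with children $c_1,c_2,c_3$, let $c_1$ be an internal node with two leaf children $a,b$, and take $L_1=\{a\}$, $L_2=\{b,c_2,c_3\}$. Then $L_2$ is consecutive, $\mathcal{B}(L_2)$ is the edge $(c_1,a)$ (so $\mathcal{B}(L_2)\ne\rho$, and your ``key observation'' does not apply to it), yet $T_{c_1}$ is neither contained in nor disjoint from $L_2$; consequently $\{L_j:L_j\subseteq\yield(T_{c_1})\}=\{L_1\}$ fails to partition $\yield(T_{c_1})=\{a,b\}$, and the recursion does not close.

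The paper sidesteps any rooting. For each $i$ it forms the tree $T_i$ obtained from $T$ by collapsing the consecutive set $\yield(T)\setminus L_i$ to a single leaf, observes that the $T_i$ partition the \emph{edge set} of $T$, and notes that each $\sigma_i$ pins down the rotation at every internal node of $T_i$. Two subtrees $T_i,T_j$ can overlap only in a common boundary Q-node, where the no-incompatibility hypothesis forces their rotations to agree, so gluing the local rotations yields $\Sigma$ directly. If you want to rescue the inductive route you must explicitly carry the (at most one) ``complement-type'' $L_i$ down into the unique child it meets only partially, restricting $\sigma_i$ accordingly; as written, that case is simply missing.
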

	\begin{proof}
	  The only-if direction is clear.  For the if-direction, assume that no pair is incompatible.  Note that, since~$L_i$ is consecutive, so is~$\yield(T) \setminus L_i$.  We denote by~$T_i$ the subtree of~$T$ that is obtained by replacing the consecutive set~$\yield(T) \setminus L_i$ by a single leaf~$\ell$.  Note that~$T_i$ ($1 \le i \le k$) is a subtree of~$T$ and the set $\{T_1, \dots, T_k\}$ forms a partition of the edges of~$T$. 
	  Observe that~$\sigma_i$ defines a cyclic order of the edges around each node of~$T_i$.  Moreover, if~$T_i$ and~$T_j$ overlap, then they do so in
		the boundary of~$L_i$ and~$L_j$, which must hence be a
		Q-node~$\chi$.  Since no pair is incompatible, it follows that they
		induce the same cyclic order $\Sigma$ of the edges around~$\chi$.  Thus,
		together the~$\sigma_i$ determine a unique order in
		$\consistent(T)$ such that $\Sigma|_{L_i}=\sigma_i$.
		\qed
	\end{proof}
	
\smallskip	
\noindent \textbf{Pertinent FPQ-trees, Skeletal FPQ-trees, and Admissible Tuples:}
	Let $(G,D)$ be an FPQ-choosable graph, let $\mathcal{T}$ be an SPQR-decomposition tree of $G$ and let $v$ be a pole of a node $\mu$ of $\mathcal{T}$, let $T_v\in D(v)$ be an FPQ-tree associated with $v$, let $E_\mathrm{ext}$ be the set of edges that are incident to $v$ and not contained in $G_\mu$, and let $E_{\mu}^\star(v)=E(v)\setminus E_\mathrm{ext}$. Note that there is a bijection between the edges $E(v)$ of $G$ and the leaves of $T_v$, hence we shall refer to the set of leaves of $T_v$ as $E(v)$. Also note that $E_{\mu}^\star(v)$ is represented by a consecutive set of leaves in $T_v$, because in every planar embedding of $G$ the edges in $E_{\mu}^\star(v)$ must appear consecutively in the cyclic order of the edges incident to $v$.

    The \emph{pertinent FPQ-tree} of $T_v$, denoted as $\pert_\mu(T_v)$, is the FPQ-tree obtained from $T_v$ by replacing the consecutive set $E_\mathrm{ext}$ with a single leaf $\ell$. Informally, the pertinent FPQ-tree of $v$ describes the hierarchical embedding constraints for the pole $v$ within the pertinent graph $G_\mu$. For example, in Figure~\ref{fi:boundary-Snode-b} a pertinent graph $G_\mu$ with poles $u$ and $v$ is highlighted by a shaded region; the pertinent FPQ-tree $\pert_{\mu}(T_u)$ of $T_u$ and the pertinent FPQ-tree $\pert_{\mu}(T_v)$ of $T_v$ are obtained by the FPQ-trees $T_u$ and $T_v$ of Figure~\ref{fi:boundary-Snode-a}.
 
    Let $\nu_1, \dots, \nu_k$ be the children of $\mu$ in $\mathcal{T}$. Observe that the edges $E_{\nu_i}^\star(v)$ of each $G_{\nu_i}$ ($1 \le i \le k$) form a consecutive set of leaves of $A_\mu(v)=\pert_{\mu}(T_v)$.
    The \emph{skeletal FPQ-tree} of $\pert_{\mu}(T_v)$, denoted by $\skeletal_\mu(T_v)$, is the tree obtained from $\pert_\mu(T_v)$ by replacing each of the consecutive sets~$E_{\nu_i}^\star(v)$ ($1 \le i \le k$) by a single leaf~$\ell_i$. See for example, Figure~\ref{fi:boundary-Snode-c}.
    Observe that each
    Q-node of $\skeletal_\mu(T_u)$ corresponds to a Q-node of $\pert_{\mu}(T_u)$, and thus to a Q-node of $T_u$; also, distinct Q-nodes of~$\skeletal_\mu(T_u)$ correspond to distinct Q-nodes of
    $\pert_{\mu}(T_u)$, and thus to distinct Q-nodes of $T_u$.  For each Q-node~$\chi$
    of~$T_u$ that is a boundary of~$\mu$ or of one of its
    children~$\nu_i$, there is a corresponding Q-node in~$\skeletal_\mu(T_u)$ that inherits its default orientation from $T_u$.
%
	

	Let $(G,D)$ be an FPQ-choosable graph, let $\mathcal{T}$ be an SPQR-decomposition tree of $G$, let $\mu$ be a node of $\mathcal{T}$, and let $u$ and $v$ be the poles of $\mu$. 
	We denote with $(G_\mu,D_\mu)$ the FPQ-choosable graph consisting of the pertinent graph $G_\mu$ and the set $D_\mu$ that is defined as follows:
	$D_\mu(z)=D(z)$ for each vertex $z$ of $G_\mu$ that is not a pole, and $D_\mu(v)=\{\pert_\mu(T_v)\mid T_v\in D(v)\}$ if $v$ is a pole of $\mu$.
	A tuple $\langle T_u,T_v,o_u,o_v\rangle \in D(u) \times D(v) \times \{0,1\} \times \{0,1\}$ is \emph{admissible for $G_\mu$} if there exists an assignment $A_\mu$ of $(G_\mu,D_\mu)$ and a planar embedding $\emb_\mu$ of $G_\mu$ consistent with $A_\mu$ such that $A_\mu(u)=\pert_{\mu}(T_u)$, $A_\mu(v)=\pert_{\mu}(T_v)$, $\mathcal{B}(E_{\mu}^\star(u))$ is clockwise (counter-clockwise) in $T_u$ if $o_u=0$ ($o_u=1$), and $\mathcal{B}(E_{\mu}^\star(v))$ is clockwise (counter-clockwise) in $T_v$ if $o_v=0$ ($o_v=1$).
	We say that a tuple is \emph{admissible for $\mu$} if it is admissible for $G_\mu$.
	We denote by $\Psi(\mu)$ the set of admissible tuples for $G_\mu$.

\smallskip

\noindent \textbf{FPT Algorithm:} In order to test if $(G,D)$ is FPQ-choosable planar, we root the SPQR-decomposition tree $\mathcal{T}$ at an arbitrary Q-node and we visit $\mathcal{T}$ from the leaves to the root. At each step of the visit, we equip the currently visited node $\mu$ with the set $\Psi(\mu)$. If we encounter a node $\mu$ such that $\Psi(\mu)= \emptyset$, we return that $(G,D)$ is not FPQ-choosable planar; otherwise the planarity test returns an affirmative answer.
If the currently visited node $\mu$ is a leaf of $\mathcal{T}$, we set $\Psi(\mu)=D(u) \times D(v) \times \{0,1\} \times \{0,1\}$, because its pertinent graph is a single edge.
If $\mu$ is an internal node, $\Psi(\mu)$ is computed from the sets of admissible tuples of the children of $\mu$. The next lemmas describe how to compute $\Psi(\mu)$ depending on whether $\mu$ is an S-, P-, or R-node.
	
	\begin{lemma}\label{le:Snode}
		Let $\mu$ be an S-node with children $\nu_1$ and $\nu_2$. Given $\Psi(\nu_1)$ and $\Psi(\nu_2)$, the set $\Psi(\mu)$ can be computed in $O(D_{\mathrm{max}}^2 \log (D_{\mathrm{max}}))$ time, where $D_{\mathrm{max}}= \max_{v\in V}|D(v)|$.
	\end{lemma}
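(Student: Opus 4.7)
The plan is to compute $\Psi(\mu)$ by joining admissible tuples from the two children along the shared non-pole vertex. Let $u,v$ be the poles of $\mu$ and $w$ the internal vertex of the series skeleton, so that $\nu_1$ has poles $u,w$ and $\nu_2$ has poles $w,v$. Because $u$ is a pole of $\nu_1$ but not of $\nu_2$, we have $E_\mu^\star(u) = E_{\nu_1}^\star(u)$, and hence $\mathcal{B}(E_\mu^\star(u))$ and its orientation in any $T_u \in D(u)$ coincide with those tracked by admissible tuples of $\nu_1$; the symmetric statement holds for $v$ and $\nu_2$. Consequently, the bits $o_u$ and $o_v$ in a tuple of $\Psi(\mu)$ pass through unchanged from the corresponding tuples of $\Psi(\nu_1)$ and $\Psi(\nu_2)$.

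The critical step is the matching at $w$. Since every edge incident to $w$ lies in $G_\mu$, for any $T_w \in D(w)$ the sets $E_{\nu_1}^\star(w)$ and $E_{\nu_2}^\star(w)$ partition $\yield(T_w)$ into two complementary consecutive sets. A short argument based on Lemma~\ref{le:boundary} shows that complementary consecutive sets share a common boundary in $T_w$, either a common split edge or a common Q-node (with the same default orientation inherited from $T_w$). Given candidates $\langle T_u, T_w, o_u, o_w^{(1)}\rangle \in \Psi(\nu_1)$ and $\langle T_w, T_v, o_w^{(2)}, o_v\rangle \in \Psi(\nu_2)$, Lemma~\ref{le:incompatibility} applied to the partition $\{E_{\nu_1}^\star(w), E_{\nu_2}^\star(w)\}$ certifies that their extensible orders at $w$ fuse into a single cyclic order in $\consistent(T_w)$ iff they are not incompatible, i.e.\ iff the common boundary is an edge, or it is a Q-node with $o_w^{(1)} = o_w^{(2)}$. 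In that case, gluing the planar embeddings of $G_{\nu_1}$ and $G_{\nu_2}$ along the induced cyclic order at $w$ yields a planar embedding of $G_\mu$ that witnesses $\langle T_u,T_v,o_u,o_v\rangle \in \Psi(\mu)$, and any such embedding decomposes this way. This yields an exact characterization of $\Psi(\mu)$.

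Algorithmically, I would index the admissible tuples of each child by the $T_w$-coordinate using a dictionary keyed by an identifier of each FPQ-tree in $D(w)$; the boundary type in $T_w$ is read off once from the skeletal FPQ-tree $\skeletal_{\nu_1}(T_w)$ already computed when processing $\nu_1$. For each $T_w$, I scan the matching $\nu_1$- and $\nu_2$-buckets, combine entries under the boundary-dependent compatibility rule, and insert each combined tuple into a deduplicating sorted set representing $\Psi(\mu)$. Since $|\Psi(\nu_1)|$, $|\Psi(\nu_2)|$, and $|\Psi(\mu)|$ are all in $O(D_{\mathrm{max}}^2)$ and each insertion/lookup costs $O(\log D_{\mathrm{max}})$, careful bookkeeping yields the claimed $O(D_{\mathrm{max}}^2 \log D_{\mathrm{max}})$ bound.

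The main technical obstacle is the combinatorial observation of the second step: that complementary consecutive sets of a single FPQ-tree share a common boundary and that their compatibility collapses to equality of a single orientation bit. Once this is in hand, correctness is a direct application of Lemma~\ref{le:incompatibility}, and the running time follows from standard dictionary-and-sort techniques over the shared coordinate $T_w$.
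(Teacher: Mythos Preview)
Your proposal is correct and follows essentially the same approach as the paper: both characterize $\Psi(\mu)$ as the set of tuples $\langle T_u,T_v,o_u,o_v\rangle$ for which there exist $T_w\in D(w)$ and a common orientation $o_w$ with $\langle T_u,T_w,o_u,o_w\rangle\in\Psi(\nu_1)$ and $\langle T_w,T_v,o_w,o_v\rangle\in\Psi(\nu_2)$, using that $E_{\nu_1}^\star(w)$ and $E_{\nu_2}^\star(w)$ are complementary consecutive sets sharing the same boundary in $T_w$ and invoking Lemma~\ref{le:incompatibility}. The only cosmetic difference is that you case-split on whether the shared boundary is an edge or a Q-node, whereas the paper absorbs the edge case into its convention that an edge boundary counts as both clockwise and counter-clockwise, so a matching $o_w$ always exists; your algorithmic sketch (bucket by $T_w$, then join and deduplicate) is likewise at the same level of detail as the paper's ``order $\Psi(\nu_1)$ and $\Psi(\nu_2)$ and merge''.
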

	\begin{proof}
		Let $u$ and $v$ be the poles of $\mu$, and let $w$ be the pole in common between $G_{\nu_1}$ and $G_{\nu_2}$. We show that $\langle T_u,T_v, o_u, o_v \rangle \in \Psi(\mu)$ if and only if there exist a tree $T_w \in D(w)$ and an $o_w \in \{0,1\}$, such that $\langle T_u, T_w, o_u, o_w \rangle \in \Psi(\nu_1)$ and $\langle T_w, T_v, o_w, o_v \rangle \in \Psi(\nu_2)$.
		
		If $\langle T_u, T_v, o_u, o_v \rangle \in \Psi(\mu)$, then there exist an assignment $A_\mu$ of $(G_\mu, D_\mu)$ and a planar embedding $\emb_\mu$ of $G_\mu$ consistent with $A_\mu$ such that $A_\mu(u)=\pert_\mu(T_u)$, $A_\mu(v)=\pert_\mu(T_v)$, $\mathcal{B}(E_{\mu}^\star(u))$ is clockwise (counter-clockwise) in $T_u$ if $o_u=0$ ($o_u=1$), and $\mathcal{B}(E_{\mu}^\star(v))$ is clockwise (counter-clockwise) in $T_v$ if $o_v=0$ ($o_v=1$). Let $\emb_{\nu_1}$ and $\emb_{\nu_2}$ be the embeddings of $G_{\nu_1}$ and $G_{\nu_2}$ induced by $\emb_\mu$, respectively, and let $T_w = A_\mu(w)$. Observe that $E_{\nu_1}^\star(w)$ and $E_{\nu_2}^\star(w)$ are disjoint consecutive sets of $T_w$ sharing the same boundary in $T_w$.
		Also, observe that $\emb_{\mu}(w)$ is an extension of both $\emb_{\nu_1}(w)$ and $\emb_{\nu_2}(w)$. By Lemma~\ref{le:incompatibility}, $\emb_{\nu_1}(w)$ and $\emb_{\nu_2}(w)$ are not incompatible, and hence $\mathcal{B}(E_{\nu_1}^\star(w))$ and $\mathcal{B}(E_{\nu_2}^\star(w))$ are both clockwise or both counter-clockwise.
		We set $o_w=0$ if they are both clockwise, and $o_w=1$ otherwise.
%
		For every vertex $x$ of $G_{\nu_1}$ different from $w$, we set $A_{\nu_1}(x)=A_{\mu}(x)$; for $w$ we set $A_{\nu_1}(w)=\pert_{\nu_1}(T_w)$.
		Since $\emb_{\nu_1}$ is consistent with $A_{\nu_1}$ and $\emb_{\nu_1}(u)=\emb_{\mu}(u)$, $\mathcal{B}(E_{\nu_1}^\star(u))$ is clockwise (counter-clockwise) in $T_u$ if $o_u=0$ (if $o_u=1$).
		By observing that $\mathcal{B}(E_{\nu_1}^\star(w))$ is clockwise (counter-clockwise) if $o_w=0$ (if $o_w=1$), we have that $\langle T_u, T_w, o_u, o_w \rangle \in \Psi(\nu_1)$.
		The same argument can be used to show that $\langle T_u, T_w, o_u, o_w \rangle \in \Psi(\nu_2)$.
		It follows that if $\langle T_u,T_v, o_u, o_v \rangle \in \Psi(\mu)$, there exist a tree $T_w \in D(w)$ and an $o_w \in \{0,1\}$ such that $\langle T_u, T_w, o_u, o_w \rangle \in \Psi(\nu_1)$ and $\langle T_w, T_v, o_w, o_v \rangle \in \Psi(\nu_2)$.
		
		For the converse, assume that there exist a tree $T_w \in D(w)$ and an $o_w \in \{0,1\}$, such that $\theta_1=\langle T_u, T_w, o_u, o_w \rangle \in \Psi(\nu_1)$ and $\theta_2=\langle T_w, T_v, o_w, o_v \rangle \in \Psi(\nu_2)$. By definition, there exist assignments $A_{\nu_1}$ and $A_{\nu_2}$ of $(G_{\nu_1},D_{\nu_1})$ and $(G_{\nu_2},D_{\nu_2})$ respectively, and two planar embeddings $\emb_{\nu_1}$ and $\emb_{\nu_2}$ that are consistent with $A_{\nu_1}$ and $A_{\nu_2}$ respectively, 
		%
		such that $A_{\nu_1}(u)=\pert_{\nu_1}(T_u)$, $A_{\nu_1}(w)=\pert_{\nu_1}(T_w)$, $\mathcal{B}(E_{\nu_1}^\star(u))$ is clockwise (counter-clockwise) in $T_u$ if $o_u=0$ ($o_u=1$), $\mathcal{B}(E_{\nu_1}^\star(w))$ is clockwise (counter-clockwise) in $T_w$ if $o_w=0$ ($o_w=1$), $A_{\nu_2}(w)=\pert_{\nu_2}(T_w)$, $A_{\nu_2}(v)=\pert_{\nu_2}(T_v)$, $\mathcal{B}(E_{\nu_2}^\star(w))$ is clockwise (counter-clockwise) in $T_w$ if $o_w=0$ ($o_w=1$), and $\mathcal{B}(E_{\nu_2}^\star(v))$ is clockwise (counter-clockwise) in $T_v$ if $o_v=0$ ($o_v=1$).
		We define an assignment $A_\mu$ and a planar embedding $\emb_{\mu}$ of $G_\mu$ consistent with $A_\mu$ such that $A_\mu(u)=\pert_{\mu}(T_u)$, $A_\mu(v)=\pert_{\mu}(T_v)$, $\mathcal{B}(E_{\mu}^\star(u))$ is clockwise (counter-clockwise) in $T_u$ if $o_u=0$ ($o_u=1$), and $\mathcal{B}(E_{\mu}^\star(v))$ is clockwise (counter-clockwise) in $T_v$ if $o_v=0$ ($o_v=1$).
		Embedding $\emb_{\mu}$ of $G_\mu$ is obtained by merging $\emb_{\nu_1}$ and $\emb_{\nu_2}$ as follows.
		For every vertex $x$ of $G_{\nu_1}$ different from $w$, we set $\emb_{\mu}(x)=\emb_{\nu_1}(x)$, for every vertex $y$ of $G_{\nu_2}$ different from $w$, we set $\emb_{\mu}(y)=\emb_{\nu_2}(y)$. For $w$, since $o_w$ has the same value in $\theta_1$ and in $\theta_2$, hence $\mathcal{B}(E_{\nu_1}^\star(w))$ and $\mathcal{B}(E_{\nu_2}^\star(w))$ are not incompatible. By Lemma~\ref{le:incompatibility}, there exists an order of the leaves of $T_w$ that is an extension of both $\emb_{\nu_1}(w)$ and $\emb_{\nu_2}(w)$: Let $\emb_{\mu}(w)$ be this order.
		Assignment $A_{\mu}$ for $(G_{\mu}, D_{\mu})$ is defined as follows.
		For every vertex $x$ of $G_{\nu_1}$ different from $w$, we set $A_{\mu}(x)=A_{\nu_1}(x)$; for every vertex $y$ of $G_{\nu_2}$ different from $w$, we set $A_{\mu}(y)=A_{\nu_2}(y)$; for $w$ we set $A_{\mu}(w)=T_{w}$.
		Since $\emb_{\mu}$ is consistent with $A_{\mu}$, $\emb_{\mu}(u)=\emb_{\nu_1}(u)$, and $\emb_{\mu}(v)=\emb_{\nu_2}(v)$, $\mathcal{B}(E_{\mu}^\star(u))$ is clockwise (counter-clockwise) in $T_u$ if $o_u=0$ (if $o_u=1$), and $\mathcal{B}(E_{\mu}^\star(v))$ is clockwise (counter-clockwise) in $T_v$ if $o_v=0$. Furthermore, since $E_{\nu_1}^\star(u)=E_{\mu}^\star(u)$ and $E_{\nu_2}^\star(v)=E_{\mu}^\star(v)$, $A_\mu(u)=\pert_{\mu}(T_u)$ and $A_\mu(v)=\pert_{\mu}(T_v)$. It follows that if there exist a tree $T_w \in D(w)$ and an $o_w \in \{0,1\}$ such that $\langle T_u, T_w, o_u, o_w \rangle \in \Psi(\nu_1)$ and $\langle T_u, T_w, o_u, o_w \rangle \in \Psi(\nu_2)$, then $\langle T_u,T_v, o_u, o_v \rangle \in \Psi(\mu)$.
		
		
		
		Set $\Psi(\mu)$ is computed from $\Psi(\nu_1)$ and $\Psi(\nu_2)$ by looking for pairs of tuples $\langle T_u, T_w, o_u, o_w \rangle \in \Psi(\nu_1)$, $\langle T_w, T_v, o_w, o_v \rangle \in \Psi(\nu_2)$ sharing the same $T_w$ and the same value of $o_w$.
		By ordering $\Psi(\nu_1)$ and $\Psi(\nu_2)$, $\Psi(\mu)$ is computed in $O(D_{\mathrm{max}}^2 \log (D_{\mathrm{max}}))$ time.
		\qed
	\end{proof}

	\begin{lemma}\label{le:Pnode}
		Let $\mu$ be a P-node with children $\nu_1,\nu_2,\dots,\nu_k$. Given $\Psi(\nu_1),\Psi(\nu_2),$ $\dots,\Psi(\nu_k)$, the set $\Psi(\mu)$ can be computed in $O(D_{\mathrm{max}}^2 \cdot n)$ time, where $D_{\mathrm{max}} = \max_{v\in V}|D(v)|$.
	\end{lemma}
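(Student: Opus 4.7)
The plan is to mirror the S-node argument of Lemma~\ref{le:Snode} but generalize it to $k$ parallel children, which additionally forces a planarity condition coming from the P-node skeleton. Let $u, v$ be the poles of $\mu$; every child $\nu_i$ shares these same poles. I will characterize $\tau = \langle T_u, T_v, o_u, o_v\rangle \in \Psi(\mu)$ as the existence, for each child $\nu_i$, of orientations $(o_u^i, o_v^i)$ with $\langle T_u, T_v, o_u^i, o_v^i\rangle \in \Psi(\nu_i)$ satisfying three conditions: (i) the extensible orders on the consecutive sets $E_{\nu_i}^\star(u)$ induced in $T_u$ are pairwise non-incompatible and combine, via Lemma~\ref{le:incompatibility}, into an extensible order of $E_{\mu}^\star(u)$ whose orientation at $\mathcal{B}(E_{\mu}^\star(u))$ equals $o_u$; (ii) the analogous condition holds at $v$ with $o_v$; and (iii) the cyclic order of the leaves $\ell_1, \dots, \ell_k$ of $\skeletal_\mu(T_u)$ induced at $u$ is the reverse of the order induced at $v$, which is forced by the parallel-edge structure of the P-node skeleton.

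The forward direction of the iff is obtained by restricting a realizing embedding $\emb_\mu$ of $G_\mu$ to each child $\nu_i$, reading off the $(o_u^i, o_v^i)$ from $\emb_\mu$, and verifying (i)--(iii) directly. For the reverse direction I would glue the individual embeddings $\emb_{\nu_i}$ into a single embedding $\emb_\mu$: Lemma~\ref{le:incompatibility} applied separately at $u$ and at $v$ supplies the cyclic orders around the two poles from conditions (i) and (ii), while condition (iii) guarantees that the parallel virtual edges of the P-node skeleton can be routed without crossings, so the combined embedding is planar and consistent with the assignment sending each non-pole vertex of $G_{\nu_i}$ to its choice from $A_{\nu_i}$.

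Algorithmically, I would iterate over all $O(D_{\mathrm{max}}^2)$ choices of $(T_u, T_v) \in D(u) \times D(v)$ and the four choices of $(o_u, o_v)$. For a fixed choice, I would build an auxiliary FPQ-tree $\Lambda_\mu$ defined as the intersection of $\skeletal_\mu(T_u)$ with the reversal of $\skeletal_\mu(T_v)$; this precisely captures the cyclic orders of $\ell_1, \dots, \ell_k$ compatible with condition (iii), and the intersection can be built in time linear in $k$ since F-nodes generalize Q-nodes only by forbidding reversal. I would then traverse $\Lambda_\mu$ in a single bottom-up pass: at each leaf $\ell_i$ I look up from $\Psi(\nu_i)$ the constantly many pairs $(o_u^i, o_v^i)$ compatible with $T_u$ and $T_v$, and at each internal node I propagate the induced orientations of the relevant boundary Q-nodes of $T_u$ and $T_v$. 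The pass accepts iff the orientations forced at $\mathcal{B}(E_{\mu}^\star(u))$ and $\mathcal{B}(E_{\mu}^\star(v))$ agree with $o_u$ and $o_v$. Each leaf contributes $O(1)$ work and each internal node a constant number of compatibility checks, giving $O(k)$ work per candidate tuple and hence $O(D_{\mathrm{max}}^2 \cdot k) \subseteq O(D_{\mathrm{max}}^2 \cdot n)$ overall.

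The main obstacle I anticipate is the bookkeeping around shared boundary Q-nodes: several children may share a common Q-node as the boundary of their $E_{\nu_i}^\star(u)$ in $T_u$, possibly coinciding with $\mathcal{B}(E_{\mu}^\star(u))$ itself, which couples their orientation choices to each other and to the global $o_u$ (and symmetrically at $v$). Making the bottom-up pass on $\Lambda_\mu$ maintain, at each node, a constant-size summary of the feasible orientations of the boundary Q-nodes encountered so far, rather than re-checking global consistency for every child, is where the bulk of the technical care in the proof will reside.
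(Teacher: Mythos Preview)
Your characterization of $\Psi(\mu)$ is essentially the same as the paper's: your conditions (i)--(ii) and (iii) correspond to the paper's conditions (ii) and (i), respectively, and the forward/backward arguments you sketch are the right ones. The difference lies in the algorithm. The paper does not traverse an intersected tree; instead it introduces a Boolean variable $x_\chi$ for every boundary Q-node of $\skeletal_\mu(T_u)$ and $\skeletal_\mu(T_v)$, encodes condition~(iii) as the $Q$-constraints of Bl\"asius and Rutter's \textsc{Simultaneous PQ-Ordering} framework (a 2SAT formula), encodes each child constraint from $\Psi(\nu_i)$ as a 2SAT clause on the pair $(x_\chi,x_{\chi'})$ where $\chi,\chi'$ are the boundaries of $\nu_i$ in $T_u$ and $T_v$, and then solves the resulting 2SAT instance in linear time for each fixed $(T_u,T_v)$.

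The obstacle you flag is real, and your proposed fix does not obviously work. A single Q-node $\chi$ of $T_u$ can be the boundary of several children $\nu_i$, each of which may have a \emph{different} boundary Q-node $\chi'_i$ in $T_v$; symmetrically, a Q-node of $T_v$ can be coupled to several Q-nodes of $T_u$. The resulting constraint graph on the Q-node orientation variables is therefore bipartite but need not be acyclic, so there is no reason to expect that a bottom-up pass on $\Lambda_\mu$ can maintain only a \emph{constant}-size summary of feasible orientations. This is precisely why the paper falls back on 2SAT: it absorbs arbitrary such coupling patterns while still running in time linear in the number of variables and clauses, which here is $O(k)$. If you want to avoid the Bl\"asius--Rutter machinery you would have to argue a special structure (e.g., bounded treewidth of the constraint graph) that is not apparent, whereas invoking 2SAT closes the argument immediately.
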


	\begin{proof}

		Let~$u$ and~$v$ be the poles of~$\mu$. Let~$\skeletal_\mu(T_u)$ and $\skeletal_\mu(T_v)$ be the skeletal FPQ-trees of $\pert_{\mu}(T_u)$ and of $\pert_{\mu}(T_v)$, respectively. It can be proved that a tuple $\langle T_u, T_v, o_u, o_v \rangle \in \Psi(\mu)$ if and only if the following two conditions are satisfied:
		(i) There exists a planar embedding $\emb_{\mu}$ of $\skeleton(\mu)$ and a pair of skeletal FPQ-trees $\skeletal_{\mu}(T_u)$ and $\skeletal_{\mu}(T_v)$ such that $\emb_{\mu}(u)\in \consistent(\skeletal_{\mu}(T_u))$ and $\emb_{\mu}(v)\in \consistent(\skeletal_{\mu}(T_v))$; (ii) For each child $\nu_i$ of $\mu$ ($1 \le i \le k$), there exist an orientation $o_u$ of $\mathcal{B}(E_{\nu_i}^\star(u))$ and an orientation $o_v$ of $\mathcal{B}(E_{\nu_i}^\star(v))$ such that $\langle T_u, T_v, o_u, o_v \rangle \in \Psi(\nu_i)$.

		Let~$u$ and~$v$ be the poles of~$\mu$. Let~$\skeletal_\mu(T_u)$ and $\skeletal_\mu(T_v)$ be the skeletal FPQ-trees of $\pert_{\mu}(T_u)$ and of $\pert_{\mu}(T_v)$, respectively. We first show that a tuple $\langle T_u, T_v, o_u, o_v \rangle \in \Psi(\mu)$ if and only if the following two conditions are satisfied:
		\begin{enumerate}[(i)]
			\item There exists a planar embedding $\emb_{\mu}$ of $\skeleton(\mu)$ and a pair of skeletal FPQ-trees $\skeletal_{\mu}(T_u)$ and $\skeletal_{\mu}(T_v)$ such that $\emb_{\mu}(u)\in \consistent(\skeletal_{\mu}(T_u))$ and $\emb_{\mu}(v)\in \consistent(\skeletal_{\mu}(T_v))$;
			
			\item For each child $\nu_i$ of $\mu$ ($1 \le i \le k$), there exist an orientation $o_u$ of $\mathcal{B}(E_{\nu_i}^\star(u))$ and an orientation $o_v$ of $\mathcal{B}(E_{\nu_i}^\star(v))$ such that $\langle T_u, T_v, o_u, o_v \rangle \in \Psi(\nu_i)$.
		\end{enumerate}
		If $\langle T_u, T_v, o_u, o_v \rangle \in \Psi(\mu)$, then there exist an assignment $A_\mu$ of $(G_\mu, D_\mu)$ and a planar embedding $\emb_\mu$ of $G_\mu$ consistent with $A_\mu$. Let $\skeletal_{\mu}(T_u)$ and $\skeletal_{\mu}(T_v)$  be the skeletal FPQ-trees obtained from $A_\mu(u)$ and from $A_\mu(v)$, respectively. By definition of skeletal FPQ-tree, the planar embedding $\emb_\mu$ and the pair of skeletal FPQ-trees $\skeletal_{\mu}(T_u)$ and $\skeletal_{\mu}(T_v)$ satisfy Condition~(i).  Let $\emb_{\nu_i}$ be the embedding of $G_{\nu_i}$ induced by $\emb_{\mu}$ ($1 \le i \le k$). $E_{\nu_i}^\star(u)$ is a consecutive set of $\pert_{\mu}(T_u)$ and $E_{\nu_i}^\star(v)$ is a consecutive set of $\pert_{\mu}(T_v)$. Note that $\emb_{\mu}(u)$ is an  extension of $\emb_{\nu_i}(u)$ and that $\emb_{\mu}(v)$ is an extension of $\emb_{\nu_i}(v)$. We can therefore define an assignment $A_{\nu_i}$ for $(G_{\nu_i},D_{\nu_i})$ as follows: For every vertex $w$ of $G_{\nu_i}$ different from the poles of $G_{\nu_i}$, we set $A_{\nu_i}(w)= A_\mu(w)$; for the poles of $G_{\nu_i}$ we set $A_{\nu_i}(u)=\pert_{\nu_i}(T_u)$ and $A_{\nu_i}(v)=\pert_{\nu_i}(T_v)$.
		Note that $\emb_{\nu_i}$ is consistent with $A_{\nu_i}$. Thus, there exist values 
		$o_u \in \{0,1\}$ and $o_v \in \{0,1\}$ such that $\langle T_u, T_v, o_u, o_v \rangle \in \Psi(\nu_i)$ and hence Condition~(ii) is satisfied. It follows that if $\langle T_u, T_v, o_u, o_v \rangle \in \Psi(\mu)$, both Condition~(i) and Condition~(ii) are satisfied. 
		
		Suppose now that Condition~(i) and Condition~(ii) are satisfied. By Condition~(i), the planar embedding $\emb_\mu$ and the pair of skeletal FPQ-trees $\skeletal_{\mu}(T_u)$ and $\skeletal_{\mu}(T_v)$ describe how to arrange the children around $u$ and $v$ in a planar embedding of $\skeleton(\mu)$, since the union of all $E_{\nu_i}^\star(u)$ coincides with $E_{\mu}^\star(u)$ and the union of all $E_{\nu_i}^\star(v)$ coincides with $E_{\mu}^\star(v)$ ($1 \le i \le k$). By Condition~(ii) there exist an assignment $A_{\nu_i}$ of $(G_{\nu_i},D_{\nu_i})$ and a planar embedding $\emb_{\nu_i}$ that is consistent with $A_{\nu_i}$. A planar embedding $\emb_{\mu}$ of $G_\mu$ is obtained by merging all the $\emb_{\nu_i}$. More precisely, for every vertex $w$ of $G_{\nu_i}$ different from the poles, we set $\emb_{\mu}(w)=\emb_{\nu_i}(w)$. Concerning the poles $u$ and $v$, observe that there exists an order of the leaves of $T_u$ that is a common extension of all $\emb_{\nu_i}(u)$, and an order of the leaves of $T_v$ that is a common extension of all $\emb_{\nu_i}(v)$: Let $\emb_{\mu}(u)$ and $\emb_{\mu}(v)$ be these orders. Also, for every vertex $w$ of $G_{\nu_i}$ different from $u$ and $v$, we set $A_{\mu}(w)=A_{\nu_i}(w)$. For the poles $u$ and $v$ we set $A_{\mu}(u)=\pert_\mu(T_{u})$ and $A_{\mu}(v)=\pert_\mu(T_{v})$, respectively. Thus obtaining an embedding $\emb_{\mu}$ that is consistent with $A_{\mu}$. It follows that if Condition~(i) and Condition~(ii) are satisfied, then $\langle T_u, T_v, o_u, o_v \rangle \in \Psi(\mu)$. 
		
		We test these conditions by solving a 2SAT problem.
		We create a Boolean variable~$x_\chi$ for each
		boundary Q-node $\chi$ of either~$\skeletal_\mu(T_u)$ or~$\skeletal_\mu(T_v)$ that encodes the
		orientation of~$\chi$ as clockwise or counter-clockwise.
		For ease of notation, we also define~$x_\chi$ when~$\chi$ is
		not a Q-node but an edge.  In this case, we simply treat
		this as a placeholder for \emph{true}, i.e., both~$x_\chi$
		and~$\neg x_\chi$ are \emph{true}.
		In the following, we identify the Q-nodes of $\skeletal_\mu(T_u)$ with the
		Q-nodes of~$T_u$ they correspond to.
		We claim that the two conditions can be encoded as 2SAT
		formulas over the variables~$x_\chi$.
		
		Concerning Condition~(i), we note that we seek for an ordering $\sigma$ of the virtual edges such that
        $\sigma \in \consistent(\skeleton_\mu(T_u))$ and its
        reversal $\sigma^r$ satisfies
        $\sigma^r \in \consistent(\skeleton_\mu(T_v))$.  This can be modeled as an instances of {\sc Simultaneous PQ-Ordering}~\cite{br-spoace-16} that has two nodes~$\skeleton_\mu(T_u)$ and~$\skeleton_\mu(T_v)$ and a reversing arc~$(\skeleton_\mu(T_u),\skeleton_\mu(T_v))$ with the identity as mapping.  Then the solutions to this instance are exactly the pairs of circular orderings represented by the respective trees that are the reversal of each other.  The existence of a corresponding
        2SAT formula that describes the constraints on the orientations of the Q-nodes then follows immediately from the work of
        Bl\"asius and Rutter~\cite[Lemma 4]{br-spoace-16}, who
        refer to these formulas as $Q$-constraints.

		Concerning Condition~(ii), consider a child~$\nu_i$ and let~$\chi$
		and~$\chi'$ denote the boundaries of~$\nu_i$ in $T_u$
		and~$T_v$, respectively.  Observe that the subset of
		values~$(o_\chi,o_{\chi'}) \in \{0,1\}^2$ for which
		$\langle T_u, T_v, o_\chi,o_{\chi'} \rangle \in
		\Psi(\nu_i)$ is a subset of~$\{0,1\}^2$, and it can hence be
		encoded as the satisfying assignments of a 2SAT
		formula~$\varphi_i$ over variables~$x_\chi$ and~$x_{\chi'}$.
		Let now $\chi$ and~$\chi'$ denote the boundaries of~$\mu$
		in~$T_u$ and~$T_v$, respectively.  It follows
		that~$\langle T_u, T_v, o_u, o_v \rangle \in \Psi(\mu)$ if and
		only if there exists a satisfying assignment
		of~$\varphi_\mu \wedge \bigwedge_{i=1}^h \varphi_i$ such
		that $x_\chi = o_u$ and~$x_{\chi'} = o_v$.  These
		values of~$o_u$ and~$o_v$ can be computed by using a
		linear-time 2SAT algorithm.
		Since the number of virtual edges is $O(n)$, so is the
		number of Q-nodes, and therefore the number of variables.
		Therefore, for each pair of trees $(T_u,T_v)$ the 2SAT
		formula can be constructed and solved in $O(n)$ time.  This implies the time complexity in the statement.
		\qed
	\end{proof}

	\begin{lemma}\label{le:Rnode}
		Let $\mu$ be an R-node with children $\nu_1, \nu_2, \dots, \nu_k$. Given $\Psi(\nu_1), \Psi(\nu_2),$ $\dots \Psi(\nu_k)$, the set $\Psi(\mu)$ can be computed in $O(D_{\mathrm{max}}^{\frac{3}{2} b} \cdot n_\mu^2 + n_\mu^3)$ time, where $D_{\mathrm{max}} = \max_{v\in V}|D(v)|$, $b$ is the branchwidth of $G_\mu$, and $n_\mu$ is the number of vertices of $G_\mu$.
	\end{lemma}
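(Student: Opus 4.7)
The plan is to process $\skeleton(\mu)$ by dynamic programming on a sphere-cut decomposition. Since $\skeleton(\mu)$ is triconnected, its planar embedding is unique up to a global flip, which I handle by trying both reflections. First I would compute a sphere-cut decomposition $\mathcal{D}$ of $\skeleton(\mu)$ of width equal to its branchwidth $b$ in $O(n_\mu^3)$ time using known algorithms for planar graphs~\cite{gt-obdpg-08,st-crr-94,dpbf-eeapg-10}. In $\mathcal{D}$ each leaf corresponds to one virtual edge of $\skeleton(\mu)$, hence to the pertinent graph of some child $\nu_i$, and each internal edge corresponds to a noose meeting at most $\lfloor \tfrac{3}{2} b \rfloor$ vertices of $\skeleton(\mu)$.

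I would root $\mathcal{D}$ at the edge separating the virtual edge between the poles $u,v$ (adding it temporarily if missing) from the rest. For each node $\tau$ of $\mathcal{D}$, let $N(\tau)$ denote the set of vertices on the corresponding noose, and let $G_\tau \subseteq G_\mu$ be the union of the pertinent graphs of all children whose leaves lie in the subtree of $\mathcal{D}$ rooted at $\tau$. A \emph{state} at $\tau$ is a pair $(\vec{T},\vec{o})$ assigning to each $w \in N(\tau)$ an FPQ-tree $T_w \in D(w)$ and, for every boundary Q-node $\chi$ of $T_w$ that is the boundary of the set of edges of $w$ lying inside $G_\tau$, a clockwise/counter-clockwise orientation. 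A state is \emph{admissible} if there exist an assignment $A_\tau$ of $(G_\tau,D_\tau)$ and a planar embedding $\emb_\tau$ of $G_\tau$ consistent with $A_\tau$ that realize the chosen FPQ-trees and orientations.

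The DP is initialized at each leaf $\tau$ of $\mathcal{D}$, corresponding to some child $\nu_i$ with poles $w,z$, by reading the admissible states directly off $\Psi(\nu_i)$. At an internal node $\tau$ with children $\tau_1,\tau_2$, a pair $(s_1,s_2)$ of child states is merged into a state at $\tau$ precisely when (i) $s_1$ and $s_2$ agree on the FPQ-tree chosen for every vertex in $N(\tau_1)\cap N(\tau_2)$, and (ii) for every such vertex $w$ and every boundary Q-node $\chi$ of $T_w$ whose orientation is fixed in both $s_1$ and $s_2$, the two orientations are not incompatible in the sense of Lemma~\ref{le:incompatibility}. At the root, projecting admissible states onto $(T_u,T_v,o_u,o_v)$ yields $\Psi(\mu)$.

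For the running time, there are at most $D_{\max}^{\lfloor 3b/2 \rfloor}$ choices of FPQ-trees per noose; for each such choice the orientations of the $O(b)$ relevant boundary Q-nodes can be captured succinctly (e.g.\ by a 2SAT subformula as in the proof of Lemma~\ref{le:Pnode}), so the set of admissible states at any node has size $O(D_{\max}^{3b/2})$. Merging takes $O(D_{\max}^{3b/2} \cdot n_\mu)$ time per internal node, and $\mathcal{D}$ has $O(n_\mu)$ nodes, so the DP runs in $O(D_{\max}^{\frac{3}{2}b} \cdot n_\mu^2)$ time, which combined with the $O(n_\mu^3)$ preprocessing matches the claimed bound. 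The main obstacle I anticipate is the bookkeeping of orientations: a single Q-node of $T_w$ may be the boundary for several disjoint consecutive subsets of edges of $w$ split across the two sides of a noose, so the merge must compare orientations per Q-node rather than per child; Lemma~\ref{le:incompatibility} is the combinatorial tool that keeps this comparison local and allows the disjoint Q-nodes of each $T_w$ to be handled independently.
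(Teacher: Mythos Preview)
Your approach coincides with the paper's: compute a sphere-cut decomposition of $\skeleton(\mu)$ and run a bottom-up dynamic program whose states record, for every vertex on the current noose, the chosen FPQ-tree and the orientation of the relevant boundary. One concrete misstatement should be fixed: in a sphere-cut decomposition of width~$b$, each noose meets at most $b$ vertices, not $\lfloor\tfrac{3}{2}b\rfloor$. The exponent $\tfrac{3}{2}b$ appears only at the \emph{merge} step: if the two child nooses are $B_1,B_2$ and the parent noose is $B_0$, all three of size at most $b$, then the standard branch-decomposition inequality gives $|B_1\cup B_2|\le\tfrac{3}{2}b$, which bounds the number of rows of the joined table \emph{before} the shared vertices are projected out. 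A second, simplifying difference is that the paper avoids your orientation bookkeeping (and the 2SAT idea) entirely: it first fixes one of the two planar embeddings $\emb_\mu$ of $\skeleton(\mu)$ and restricts each $\Psi(\nu_i)$ to the tuples \emph{compliant} with $\emb_\mu$; once $\emb_\mu$ is fixed, the orientation of every boundary Q-node is forced, so a state reduces to a choice of one FPQ-tree per noose vertex and the merge is a plain relational join on those tree choices.
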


	\begin{proof}
		Since $\mu$ is an R-node, $\skeleton(\mu)$ has only two possible planar embeddings. Let $u$ and $v$ be the poles of $\mu$.
		Let~$\nu_i$ ($1 \leq i \leq k$) be a child of~$\mu$ that corresponds to a virtual
		edge $(x,y)$ of $\mathcal{T}$ and let~$T_x \in D_\mu(x)$. Recall
		that~$E_{\nu_i}^\star(x)$ is a consecutive set of leaves in~$T_x$. If~$\mathcal{B}(E_{\nu_i}^\star(x))$ in $T_x$ is a Q-node~$\chi$, by Lemma~\ref{le:boundary} there are at least two edges incident to $\chi$ that do not belong to $E_{\nu_i}^\star(x)$. It follows that an orientation $o_x$ of~$\chi$ determines an embedding
		of~$\skeleton(\mu)$. 
		We call the pair $(T_x,o_x)$ \emph{compliant} with a planar embedding~$\emb_\mu$ of $\skeleton(\mu)$ if either the
		boundary is an edge, or if the orientation of the boundary
		Q-node~$\chi$ determines the embedding~$\emb_\mu$
		of~$\skeleton(\mu)$.
		We denote by~$\Psi_{\emb_\mu}(\nu_i)$ the subset
		of tuples $\langle T_x,T_y,o_x,o_y \rangle \in \Psi(\nu_i)$ such that $T_x$ with
		orientation $o_x$ and~$T_y$ with orientation $o_y$ are both
		compliant with~$\emb_\mu$.
		Similarly $\Psi_{\emb_\mu}(\mu)$ is the subset of tuples $\langle T_u,T_v,o_u,o_v \rangle \in\Psi(\mu)$ whose pairs $(T_u,o_u)$ and $(T_v,o_v)$ are both compliant with~$\emb_\mu$. 	

	We show how to compute $\Psi_{\emb_\mu}(\mu)$ from the sets~$\Psi_{\emb_\mu}(\nu_i)$
	of the children~$\nu_i$ of~$\mu$ ($1 \le i \le k$). Set $\Psi_{\emb_\mu'}(\mu)$ is computed analogously. Note that the set $\Psi_{\emb_\mu}(\nu_i)$ can be extracted by scanning $\Psi(\nu_i)$ and selecting only those admissible tuples whose pairs $(T_x,o_x)$ and $(T_y,o_y)$ are both compliant with~$\emb_\mu$.
	Since $G_\mu$ has branchwidth $b$, $\skeleton(\mu)$ is planar, it has branchwidth at most $b$, and we can execute a sphere-cut decomposition of width at most $b$ \cite{dpbf-eeapg-10} of the planar embedding $\emb_\mu$ of $\skeleton(\mu)$. 
	Such a decomposition recursively divides $\skeleton(\mu)$ into two subgraphs, each of which is embedded inside a topological disc having at most $b$ vertices on its frontier.
	The decomposition is described by a rooted binary tree, called the \emph{sphere-cut decomposition tree} and denoted as $T_{sc}$.
	The root of $T_{sc}$ is associated with $\skeleton(\mu)$; the leaves of $T_{sc}$ are the edges of $\skeleton(\mu)$; any internal node $\beta$ of $T_{sc}$ is associated with the subgraph of $\skeleton(\mu)$ induced by the leaves of the subtree rooted at $\beta$.
	Tree $T_{sc}$ is such that when removing any of its internal edges, the two subgraphs induced by the leaves in the resulting subtrees share at most $b$ vertices. We denote as $\skeleton(\beta)$ the subgraph associated with a node $\beta$ of $T_{sc}$ and with $\mathcal{D}_\beta$ the topological disc that separates $\skeleton(\beta)$ from the rest of $\skeleton(\mu)$.
	Note that $\skeleton(\beta)$ has at most $b$ vertices on the frontier of $\mathcal{D}_\beta$. In particular, if $\beta$ is the root of $T_{sc}$, $\skeleton(\beta)$ coincides with $\skeleton(\mu)$ and the vertices of $\skeleton(\beta)$ on the frontier of $\mathcal{D}_\beta$ are exactly the poles $u$ and $v$ of $\mu$.

	We compute $\Psi_{\emb_\mu}(\mu)$ by visiting $T_{sc}$ bottom-up. We equip each node $\beta$ of $T_{sc}$ with a set of tuples $\Psi_{\emb_\mu}(\beta)$, each one consisting of at most $b$ pairs of elements $(T_x,o_x)$ such that $(T_x,o_x)$ is compliant with $\emb_\mu$, and $(T_x,o_x)$ belongs to some $\Psi_{\emb_{\mu}}(\nu_i)$.	The set of tuples associated with the root of $T_{sc}$ is therefore the set $\Psi_{\emb_{\mu}}(\mu)$. Let $\beta$ be the currently visited node of $T_{sc}$.  If $\beta$ is a leaf, it is associated with an edge representing a child $\nu_i$ of $\mu$ in $\mathcal{T}$ and $\Psi_{\emb_\mu}(\beta) = \Psi_{\emb_{\mu}}(\nu_i)$. 

If $\beta$ is an internal node of $T_{sc}$, we compute $\Psi_{\emb_\mu}(\beta)$ from the sets of tuples $\Psi_{\emb_\mu}(\beta_1)$ and $\Psi_{\emb_\mu}(\beta_2)$ associated with the two children $\beta_1$ and $\beta_2$ of $\beta$.
	\begin{figure}[tb]
		\centering
		\includegraphics[width=.37\textwidth]{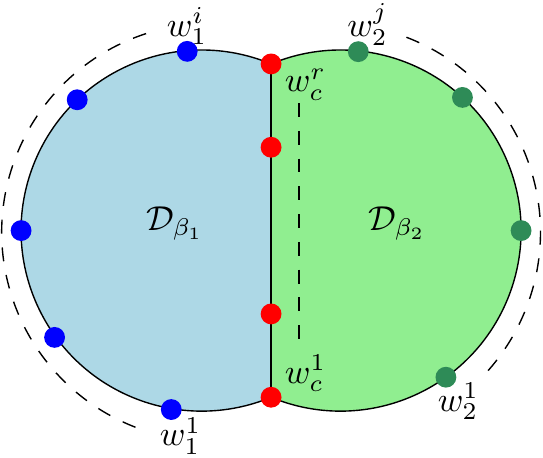}
		\caption{An example illustrating two topological discs $\mathcal{D}_{\beta_1}$ and $\mathcal{D}_{\beta_2}$ containing two subgraphs $\skeleton(\beta_1)$ and $\skeleton(\beta_2)$. $B_1=\{w_1^1,\dots,w_1^i,w_c^1,\dots,w_c^r\}$, $B_2=\{w_2^1,\dots,w_2^j,w_c^1,$ $\dots,w_c^r\}$, $B=\{w_c^1,\dots,w_c^r\}$.
		}\label{fi:disks}
	\end{figure}
		Let $B_1=\{w_1^1,\dots,w_1^i,w_c^1,\dots,w_c^r\}$ be the set of vertices of $\skeleton(\beta_1)$ that lie on the frontier of $\mathcal{D}_{\beta_1}$, and let $B_2=\{w_2^1,\dots,w_2^j,w_c^1,$ $\dots,w_c^r\}$ be the set of vertices of $\skeleton(\beta_2)$ that lie on the frontier of $\mathcal{D}_{\beta_2}$; see Figure~\ref{fi:disks} for an illustration. Let $\{w_1^1,\dots,w_1^i,w_c^1,$ $\dots,w_c^r,w_2^1,\dots,w_2^j\}$ be the set of vertices of $B_1 \cup B_2$. Also, let $B=\{w_c^1,\dots,w_c^r\}$ be the set of vertices that lie on the frontier of $\mathcal{D}_{\beta_1} \cap \mathcal{D}_{\beta_2}$; note that $B$ consists of at most $b$ vertices, i.e., $r \leq b$, and $B \subseteq B_1 \cup B_2$.
		A tuple $\langle T_{w_1^1}, \dots, T_{w_1^i}, T_{w_c^1},\dots,T_{w_c^r},o_{w_1^1}, \dots, o_{w_1^i},o_{w_c^1},\dots,o_{w_c^r} \rangle \in \Psi_{\emb_\mu}(\beta_1)$ consists of pairs $(T_{w_1^l},o_{w_1^l})$ and pairs $(T_{w_c^h},o_{w_c^h})$ ($1 \leq l \leq i$, $1 \leq h \leq r$) that are compliant with $\emb_\mu$. Similarly, a tuple $\langle T_{w_2^1}, \dots, T_{w_2^j}, T_{w_c^1},\dots,T_{w_c^r},o_{w_2^1},$ $ \dots, o_{w_2^j},o_{w_c^1},\dots,o_{w_c^r} \rangle \in \Psi_{\emb_\mu}(\beta_2)$ consists of pairs $(T_{w_2^q},o_{w_2^q})$ and pairs $(T_{w_c^h},o_{w_c^h})$ ($1 \leq q \leq j$, $1 \leq h \leq r$) that are compliant with $\emb_\mu$.
		We store the tuples of $\Psi_{\emb_\mu}(\beta_1)$ in a table $\tau_1$ where each entry is a tuple and each column contains a pair $(T_x,o_x)$.
		A table $\tau_2$ is built analogously to store the tuples $\Psi_{\emb_\mu}(\beta_2)$. We sort $\tau_1$ and $\tau_2$ according to the columns associated with the pairs $(T_{w_c^h},o_{w_c^h})$ ($1 \leq h \leq r$) and we obtain a new table $\tau$ by performing a join operation on the columns that $\tau_1$ and $\tau_2$ have in common; we then select those tuples whose pairs $(T_{w_c^h},o_{w_c^h})$ are compliant with $\emb_\mu$. Finally, we compute the set  $\Psi_{\emb_{\mu}}(\beta)$, by projecting $\tau$ on the columns associated with the pairs $(T_{w_1^l},o_{w_1^l})$ and $(T_{w_2^q},o_{w_2^q})$ ($1 \leq l \leq i$, $1 \leq q \leq j$).
		
		Observe that $\tau_1$ consists of $O(D_{\mathrm{max}}^{(i+r)})$ tuples, and table $\tau_2$ consists of $O(D_{\mathrm{max}}^{(j+r)})$ tuples.
		The join operation between $\tau_1$ and $\tau_2$ gives rise to a table $\tau$ that has  $O(D_{\mathrm{max}}^{(i+j+r)})$ tuples; since $i+r \leq b$, $j+r \leq b$, and $i+j \leq b$, we have that $2i +2j +2r \leq 3b$ and thus $i +j + r \leq \frac{3}{2} b$.

		 Sorting of the two tables can be executed in $O(D_{\mathrm{max}}^b \log(D_{\mathrm{max}}^b))$ time, since $i+r \leq b$, and $j+r \leq b$. The join operation on the sorted tables can be executed in $O(D_{\mathrm{max}}^{\frac{3}{2} b})$ time. Also, selecting those tuples of $\tau$ for which $(T_{w_c^h},o_{w_c^h})$ is compliant with $\emb_\mu$ can be done in $O(n_{\mu})$ time per tuple by looking at the cyclic order of the edges incident to $w_c^h$ in $\emb_{\mu}$ ($1 \leq h \leq r$). It follows that the set $\Psi_{\emb_{\mu}}(\beta)$ for a node $\beta$ can be computed in $O(D_{\mathrm{max}}^{\frac{3}{2} b} \cdot n_{\mu})$ time.
		 Since this procedure is repeated for every internal node of $T_{sc}$, since $T_{sc}$ has $O(n_{\mu})$ nodes, and since  $T_{sc}$ can be constructed in $O(n_{\mu}^3)$ time, we have that computing the set $\Psi_{\emb_\mu}(\mu)$ can be executed in $O(D_{\mathrm{max}}^{\frac{3}{2} b} \cdot n_{\mu}^2 + n_{\mu}^3)$ time (see, e.g.,~\cite{gt-obdpg-08,st-crr-94} for an algorithm to compute $T_{sc}$). Since  $\Psi_{\emb'_\mu}(\mu)$ is computed by an analogous procedure, the time complexity in the statement follows.\qed
	\end{proof}	

	\begin{theorem}\label{th:pqchoosable-branchwidth}
		Let $(G,D)$ be a biconnected FPQ-choosable (multi-)graph such that $G=(V,E)$ and $|V|=n$. Let $D(v)$ be the set of FPQ-trees associated with vertex $v\in V$. There exists an $O(D_{\mathrm{max}}^{\frac{3}{2} b} \cdot  n^2 + n^3)$-time algorithm to test whether $(G,D)$ is FPQ-choosable planar, where $b$ is the branchwidth of $G$ and $D_{\mathrm{max}} = \max_{v\in V}|D(v)|$.
	\end{theorem}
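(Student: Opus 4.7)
The plan is to combine Lemmas~\ref{le:Snode}, \ref{le:Pnode}, and~\ref{le:Rnode} into a standard bottom-up dynamic program on the SPQR-decomposition tree $\mathcal{T}$ of $G$. First I would compute $\mathcal{T}$ in linear time and, as already described in Section~\ref{se:problem}, intersect each FPQ-tree $T_v \in D(v)$ with the embedding tree $T_v^\epsilon$, discarding null-trees; if any $D(v)$ becomes empty the instance is trivially rejected. I would then root $\mathcal{T}$ at an arbitrary Q-node and, visiting it from the leaves to the root, equip each node $\mu$ with the set $\Psi(\mu)$ of admissible tuples. For a leaf (which corresponds to a single edge $(u,v)$) we initialize $\Psi(\mu) = D(u) \times D(v) \times \{0,1\} \times \{0,1\}$; for an internal node we apply Lemma~\ref{le:Snode}, Lemma~\ref{le:Pnode}, or Lemma~\ref{le:Rnode} depending on whether $\mu$ is an S-, P-, or R-node. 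The algorithm returns \emph{yes} if and only if $\Psi(\mu) \neq \emptyset$ for every node visited.

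Correctness is by induction on the height of $\mu$ in $\mathcal{T}$. The base case (leaves) is immediate, and the inductive step is exactly what Lemmas~\ref{le:Snode}--\ref{le:Rnode} establish, since they characterize the tuples in $\Psi(\mu)$ in terms of those of the children. Once $\mathcal{T}$ is fully processed, by the definition of admissible tuple, the existence of a non-empty $\Psi$ at the root is equivalent to the existence of an assignment $A$ and a planar embedding $\emb$ of $G$ consistent with $A$, i.e., to $(G,D)$ being FPQ-choosable planar.

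The technical point is the running-time analysis. Write $n_\mu$ for the number of vertices of $\skeleton(\mu)$. Standard properties of SPQR-decompositions give $\sum_\mu n_\mu = O(n)$, whence $\sum_\mu n_\mu^2 = O(n^2)$ and $\sum_\mu n_\mu^3 = O(n^3)$. The $O(D_\mathrm{max}^2 \log D_\mathrm{max})$ contribution of S-nodes and the $O(D_\mathrm{max}^2 \cdot n)$ contribution of P-nodes, summed over the $O(n)$ nodes of $\mathcal{T}$, are dominated by the R-node contribution. For each R-node the skeleton $\skeleton(\mu)$ is a minor of $G_\mu$ and hence a minor of $G$, so its branchwidth is at most $b$; Lemma~\ref{le:Rnode} then gives a per-R-node cost of $O(D_\mathrm{max}^{\frac{3}{2} b} \cdot n_\mu^2 + n_\mu^3)$, summing to $O(D_\mathrm{max}^{\frac{3}{2} b} \cdot n^2 + n^3)$.

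The main obstacle I expect is the bookkeeping needed to bound the branchwidth of every R-skeleton by $b$ and to argue that the summations collapse as above; both follow from the monotonicity of branchwidth under minors and the fact that the SPQR-decomposition produces triconnected components whose total size is linear, but need to be spelled out carefully.
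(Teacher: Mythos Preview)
Your proposal is correct and follows essentially the same approach as the paper: a bottom-up dynamic program on the SPQR-decomposition tree, initializing $\Psi(\mu)$ at the Q-node leaves and propagating via Lemmas~\ref{le:Snode}--\ref{le:Rnode}, with the overall bound obtained by summing the per-node costs. Your running-time paragraph is in fact slightly more explicit than the paper's, spelling out the minor-monotonicity of branchwidth and the linearity of total skeleton size that justify $\sum_\mu n_\mu^2 = O(n^2)$ and $\sum_\mu n_\mu^3 = O(n^3)$.
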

	\begin{proof}	
	While visiting $\mathcal{T}$, we check the existence of the admissible tuples for a node $\mu$ of $\mathcal{T}$ as shown by Lemmas~\ref{le:Snode},~\ref{le:Pnode}, or~\ref{le:Rnode}, depending on whether $\mu$ is an S-, P-, or R-node.
	Recall that for any Q-node $\mu$ that is not the root of $\mathcal{T}$ and that has poles $u$ and $v$, we have $\Psi(\mu)=D(u) \times D(v) \times \{0,1\} \times \{0,1\}$.
	It follows that the tuples that are admissible for a Q-node can be computed in $O(D_{\mathrm{max}}^2)$ time and, hence, in $O(D_{\mathrm{max}}^2 \cdot n)$ time for all Q-nodes of $\mathcal{T}$.
	The admissible tuples for all S-nodes of $\mathcal{T}$ can be computed in $O(D_{\mathrm{max}}^2 \log (D_{\mathrm{max}}) \cdot n)$ time, the admissible tuples for all  P-nodes can be computed in $O(D_{\mathrm{max}}^2 \cdot n^2)$ time, and the admissible tuples for all R-nodes can be computed in $O(D_{\mathrm{max}}^{\frac{3}{2} b} \cdot n^2 + n^3)$ time.
	Recall that the SPQR-decomposition tree of a biconnected $n$-vertex graph can be computed in $O(n)$~time~\cite{dt-olpt-96}.
	\qed
	\end{proof}

%

We remark that our algorithmic approach cannot be extended to simply connected graphs, since it is based on the SPQR-decomposition of the input graph $G$, that expects $G$ to be biconnected.

\section{FPQ-choosable Planarity Testing and NodeTrix Planarity Testing}\label{se:nodetrix}

The study of \pqchoosable can be applied also to address other planarity testing problems that can be modeled in terms of hierarchical embedding constraints. As a proof of concept, in this section we study the interplay between \pqchoosable and NodeTrix planarity testing. 


A \emph{flat clustered graph} is a graph for which subsets of its vertices are grouped into clusters and no vertex belongs to two clusters. For example, Figure~\ref{fi:intro-a} depicts a flat clustered graph with two clusters.
In a \emph{NodeTrix representation}, each cluster is represented as an adjacency matrix, while the inter-cluster edges are simple curves connecting the corresponding matrices~\cite{hfm-dhvsn-07,bbdlpp-valg-11,ddfp-cnrcg-jgaa-17,dlpt-ntptsc-19}. If no inter-cluster edges cross, the NodeTrix representation is said to be {\em planar}. For example, Figure~\ref{fi:intro-c} shows a planar NodeTrix representation of the flat clustered graph of Figure~\ref{fi:intro-a}. 

A \emph{NodeTrix graph with fixed sides} is a flat clustered graph $G$ that admits a NodeTrix representation where, for each inter-cluster edge $e$, the sides of the matrices to which $e$ is incident are specified as part of the input. If instead the sides are not specified, $G$ is a \emph{NodeTrix graph with free sides}.
If $G$ admits a planar NodeTrix representation, then we say that $G$ is \emph{NodeTrix planar}. 
NodeTrix planarity testing is NP-complete both in the fixed sides scenario and in the free sides scenario, even when the size of the matrices is bounded by a constant~\cite{ddfp-cnrcg-jgaa-17,bdlg-ckmepd-19}.
On the positive side, it is proved in~\cite{dlpt-ntptsc-19} that one can test in polynomial time whether a flat clustered graph is NodeTrix planar with fixed sides if the size of the matrices is bounded by a constant and if the graph obtained by collapsing each cluster into a vertex has treewidth at most two. We extend this last result to graphs having bounded treewidth (provided that the size of the clusters is bounded). To this aim we model NodeTrix planarity testing with fixed sides as a problem of \pqchoosable.

Let $G$ be a NodeTrix graph with fixed sides and with clusters $C_1, \dots, C_{n_C}$. Each permutation of the vertices of $C_i$ ($1 \le i \le n_C$) corresponds to a matrix $M_i$ in some NodeTrix representation of $G$. Note that even if the side of $M_i$ to which each inter-cluster edge is incident to is fixed, it is still possible to arbitrarily permute the edges incident to a same side and to a same vertex. For example, we can permute the two edges $f$ and $g$ incident to the right side of the matrix in Figure~\ref{fi:matrix-pqtree-a}. It follows that all the possible cyclic orders of the edges incident to $M_i$ can be described by means of an FPQ-tree, that we shall call the \emph{matrix FPQ-tree} of $M_i$, denoted as $T_{M_i}$.

Namely, $T_{M_i}$ consists of an F-node $\chi_c$ connected to $4|M_i|$ P-nodes representing the vertices of $C_i$; see, e.g., Figure~\ref{fi:matrix-pqtree-b}. These P-nodes around $\chi_c$ appear in the clockwise order that is defined by $M_i$, namely $x_1^{\tau}, \dots, x_{|M_i|}^{\tau}, x_1^{\rho}, \dots, x_{|M_i|}^{\rho},$ $ x_{|M_i|}^{\beta},$ $\dots,x_1^{\beta}, x_{|M_i|}^{\lambda},\dots,x_1^{\lambda}$, where $\tau$, $\rho$, $\beta$, and $\lambda$ represent the top, right, bottom, and left side of $M_i$, respectively. Any inter-cluster edge incident to a vertex $v$ of $M_i$ corresponds to a leaf of $T_{M_i}$ adjacent to $x_v^s$ ($1\le v\le |M_i|$, $s \in \{\tau,\rho,\beta,\lambda$\}).

	\begin{figure}[tb]
		\centering
		\subfigure[]{\includegraphics[width=.32\textwidth,page=1]{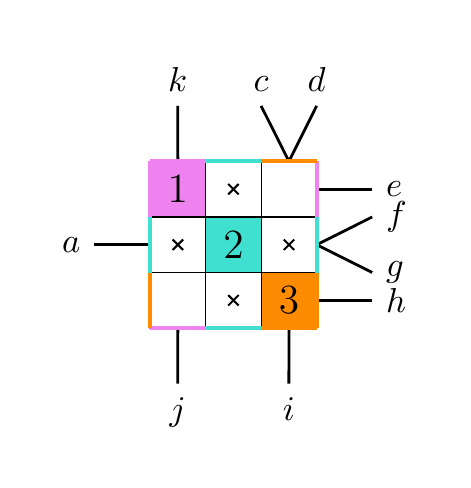}\label{fi:matrix-pqtree-a}}
		\hfill
		\subfigure[]{\includegraphics[width=.32\textwidth,page=3]{matrix-pqtree}\label{fi:matrix-pqtree-b}}
		\hfill
		\subfigure[]{\includegraphics[width=.32\textwidth,page=5]{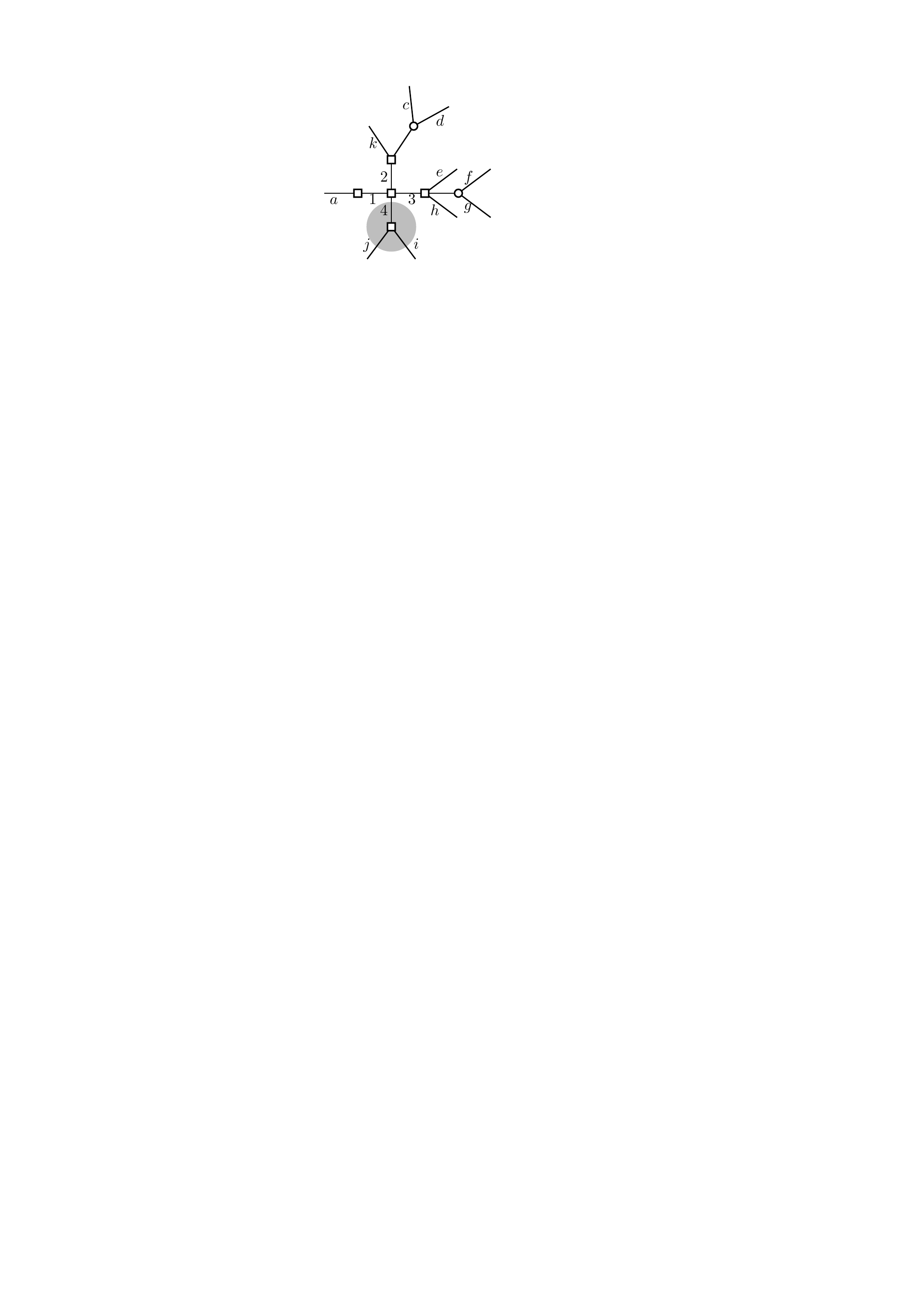}\label{fi:matrix-pqtree-c}}
		\caption{(a) A matrix $M_i$; (b) the matrix FPQ-tree $T_{M_i}$; (c) the gadget  $W^v$ replacing $T_{M_i}$.}
		\label{fi:matrix-pqtree}
	\end{figure}
	
\begin{figure}[tbp]
	\centering
	\subfigure[]{\includegraphics[width=.46\textwidth,page=1]{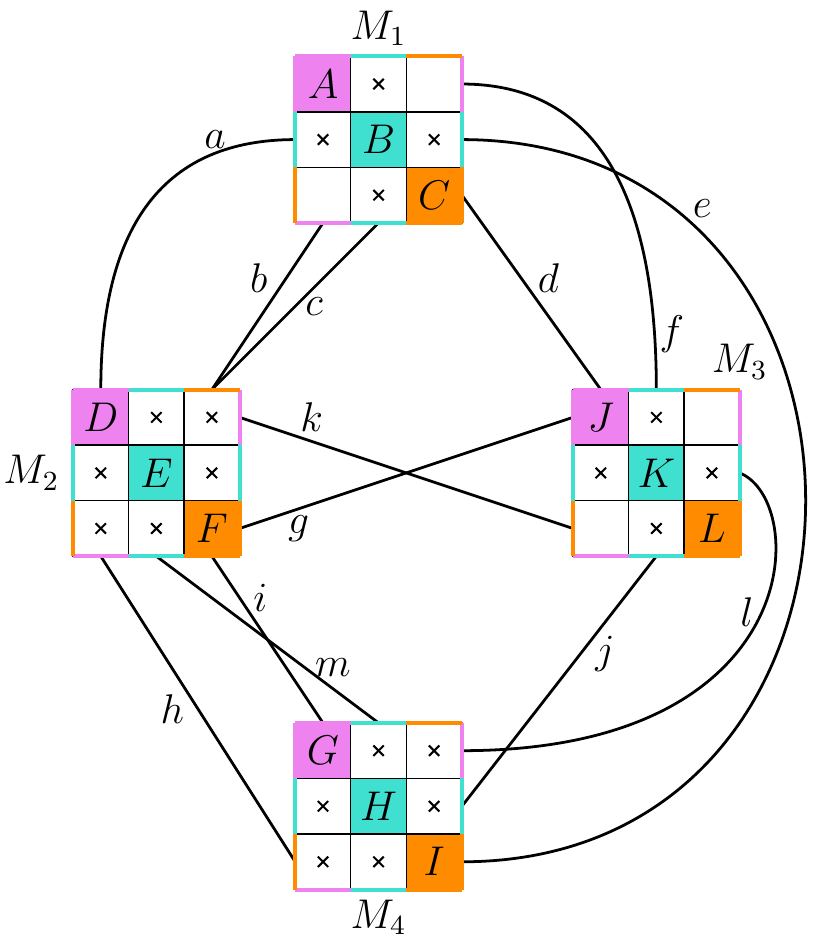}\label{fi:nodetrix-pqchoosable-a}}
	\hfil
	\subfigure[]{\includegraphics[width=.5\textwidth,page=2]{nodetrix-pqchoosable}\label{fi:nodetrix-pqchoosable-b}}
	\\
	\subfigure[]{\includegraphics[width=.3\textwidth,page=27]{nodetrix-pqchoosable}\label{fi:nodetrix-pqchoosable-c}}
	\hfil
	\subfigure[]{\includegraphics[width=.3\textwidth,page=28]{nodetrix-pqchoosable}\label{fi:nodetrix-pqchoosable-d}}
	\hfil
	\subfigure[]{\includegraphics[width=.3\textwidth,page=29]{nodetrix-pqchoosable}\label{fi:nodetrix-pqchoosable-e}}
	\hfil
	\subfigure[]{\includegraphics[width=.3\textwidth,page=30]{nodetrix-pqchoosable}\label{fi:nodetrix-pqchoosable-f}}
	\hfil
	\subfigure[]{\includegraphics[width=.3\textwidth,page=31]{nodetrix-pqchoosable}\label{fi:nodetrix-pqchoosable-g}}
	\hfil
	\subfigure[]{\includegraphics[width=.3\textwidth,page=32]{nodetrix-pqchoosable}\label{fi:nodetrix-pqchoosable-h}}
	\caption{(a) A NodeTrix graph with fixed sides $G$; (b) the constraint graph $G_C$ of $G$; (c)-(h) the FPQ-trees associated with the vertex $v_1$ of $G_C$.}
	\label{fi:nodetrix-pqchoosable}
\end{figure}

The \emph{constraint graph} of a NodeTrix graph with fixed sides $G$, denoted as $G_C$, is the FPQ-choosable multi-graph defined as follows. Graph $G_C$ has $n_C$ vertices, each one corresponding to one of the clusters of $G$, and in $G_C$ there is an edge $(u,v)$ for each inter-cluster edge that connects the two clusters corresponding to $u$ and to $v$ in $G$.
Each vertex $v$ of $G_C$ is associated with a set $D(v)$ of $|C_v|!$ FPQ-trees. More precisely, for each permutation $\pi$ of the vertices of $C_v$, let $M_v^{\pi}$ be the matrix associated with $C_v$. For each such a permutation, we equip $v$ with the matrix FPQ-tree of $M_v^{\pi}$. 

Figure~\ref{fi:nodetrix-pqchoosable-a} shows a NodeTrix graph with fixed sides $G$ whose constraint graph is depicted in Figure~\ref{fi:nodetrix-pqchoosable-b}. In Figure~\ref{fi:nodetrix-pqchoosable-b}, each vertex $v_i$ of $G_C$ ($1\le i \le 4$) represents a $3 \times 3$ matrix $M_i$ of the graph $G$ of Figure~\ref{fi:nodetrix-pqchoosable-a}; hence, $v_i$ is associated with six FPQ-trees, one for each possible permutation of the rows and the columns of $M_i$. For example, the FPQ-trees of $v_1$ are those depicted in Figure~\ref{fi:nodetrix-pqchoosable}(c)-(h).

\begin{theorem}\label{th:nodetrix-branchwitdh}
Let $G$ be a flat clustered $n$-vertex graph whose clusters have size at most $k$. Let $t$ be the treewidth of $G$. If  the constraint graph of $G$ is biconnected, there exists an $O(k!^{\frac{9}{4} t}\cdot n^2 + n^3)$-time algorithm to test whether $G$ is NodeTrix planar with fixed sides.
\end{theorem}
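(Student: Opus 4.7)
The plan is to reduce NodeTrix planarity testing with fixed sides on $G$ to \pqchoosable on the constraint graph $(G_C, D)$ defined in the text, and then invoke Theorem~\ref{th:pqchoosable-branchwidth}. First I would prove the key equivalence: $G$ is NodeTrix planar with fixed sides if and only if $(G_C, D)$ is FPQ-choosable planar. In the forward direction, a planar NodeTrix representation of $G$ fixes, for each cluster $C_i$, a permutation of its vertices (i.e., an adjacency matrix $M_i$), which singles out a matrix FPQ-tree $T_{M_i} \in D(v_i)$; the induced cyclic order of inter-cluster edges around $M_i$ is represented by $T_{M_i}$, since the F-node fixes the cyclic order of the four sides and of the vertices along each side, while the P-nodes absorb the free permutation of edges incident to the same vertex on the same side. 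In the backward direction, from a compatible assignment $A$ and a consistent planar embedding $\emb$ of $G_C$ one recovers the matrix of each cluster (from $A(v_i) = T_{M_i}$) and uses $\emb$ to route the inter-cluster edges on the prescribed sides; the planarity of $\emb$ together with the consistency with the matrix FPQ-trees guarantees a planar NodeTrix drawing.

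Next I would bound the parameters of the resulting FPQ-choosable instance. Since each cluster contains at most $k$ vertices, there are at most $k!$ distinct matrices per cluster, so $|D(v)| \le k!$ for every $v \in V(G_C)$ and hence $D_{\max} \le k!$. The constraint graph $G_C$ has at most $n$ vertices and is biconnected by hypothesis. Its treewidth is at most $t$: indeed, $G_C$ is obtained from $G$ by contracting each cluster to a single vertex, and a tree decomposition of $G$ of width $t$ projects to a tree decomposition of $G_C$ of width at most $t$. Consequently its branchwidth $b$ satisfies $b \le t+1$, which for $t \ge 2$ gives $\tfrac{3}{2}b \le \tfrac{3}{2}(t+1) \le \tfrac{9}{4}t$ (the cases $t\le 1$ correspond to instances whose constraint graph is a single edge between two clusters and can be handled directly).

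Finally I would invoke Theorem~\ref{th:pqchoosable-branchwidth} on $(G_C, D)$. The construction of $G_C$, the enumeration of the $k!$ matrix FPQ-trees per vertex, and the preliminary intersection of these FPQ-trees with the embedding trees described at the end of Section~\ref{se:problem} all take time polynomial in $n$ and $k!$, which is absorbed into the additive $n^3$ term. The testing step then runs in $O(D_{\max}^{\frac{3}{2}b}\cdot n^2 + n^3)$ time, which is $O((k!)^{\frac{9}{4}t}\cdot n^2 + n^3)$ by the parameter bounds above, yielding the claimed running time.

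The main obstacle is verifying rigorously that the matrix FPQ-tree $T_{M_i}$ captures \emph{exactly} the rotation systems realizable around the matrix $M_i$ when the sides of the inter-cluster edges are fixed: every cyclic order represented by $T_{M_i}$ must be drawable by placing $M_i$ in standard orientation and ordering the edges along each side consistently, and conversely every planar NodeTrix drawing must induce a cyclic order represented by $T_{M_i}$. Once this bijection is established and the planar routing of inter-cluster edges outside the matrices is argued from the planarity of the embedding of $G_C$, the reduction and the running-time calculation become routine.
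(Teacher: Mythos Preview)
Your proposal follows the same approach as the paper: establish that $G$ is NodeTrix planar with fixed sides iff the constraint graph $(G_C,D)$ is FPQ-choosable planar, then invoke Theorem~\ref{th:pqchoosable-branchwidth}; the paper argues the direction ``$G_C$ FPQ-choosable planar $\Rightarrow$ $G$ NodeTrix planar'' via the wheel gadgets of Gutwenger et al., but your direct argument through the matrix FPQ-tree is equally valid, and your branchwidth--treewidth conversion ($b\le t+1$, hence $\tfrac{3}{2}b\le\tfrac{9}{4}t$ for $t\ge 2$) is in fact more explicit than the paper's.

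One small caveat: your claim that a width-$t$ tree decomposition of $G$ \emph{projects} to one of $G_C$ of the same width tacitly assumes that each cluster induces a connected subgraph of $G$; otherwise the bags containing the merged vertex need not form a subtree, and collapsing a cluster is not an edge-contraction minor operation. The paper is equally imprecise on this point, and the reading consistent with its abstract and introduction is that $t$ is meant to be the treewidth of $G_C$ itself, in which case the issue does not arise.
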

\begin{proof}
	Let $n_C$ be the number of vertices of $G_C$. We show that $G$ is NodeTrix planar with fixed sides if and only if $G_C$ is FPQ-choosable planar. This, together with the observation that $n_C \in O(n)$, Theorem~\ref{th:pqchoosable-branchwidth}, and the fact that if a graph has bounded branchwidth $b$ it has treewidth at most $\big \lfloor {\frac{3}{2} b} \big \rfloor -1$~\cite{rs-gm-91}, implies the statement.
	
	If $G_C$ is FPQ-choosable planar, there exists a tuple of FPQ-trees $\theta_{n_C}$ that is admissible for $G_C$. Therefore, one can associate each vertex of  $G_C$ with its FPQ-tree in $\theta_{n_C}$, execute the embedding constrained planarity testing algorithm by Gutwenger et al.~\cite{gkm-ptoei-08} and obtain a positive answer. By this technique, each FPQ-tree $T_{u}$ is replaced by a gadget $W^u$ that is built as follows. Each F-node $\chi$ is replaced with a wheel $H_\chi$ whose external cycle has a vertex for each edge incident to $\chi$. Each vertex of $H_\chi$ has an edge, called \emph{spoke}, that is incident to it and that is embedded externally to the wheel. For example, Figure~\ref{fi:matrix-pqtree-c} shows the gadget corresponding to the FPQ-tree of Figure~\ref{fi:matrix-pqtree-b}.
	Each P-node $\rho$ of $T_u$ is represented in the gadget $W^u$ as a vertex $v_\rho$ that has a spoke for each edge of $\rho$. For example, the P-node $\rho$ with incident edges $f$ and $g$ of Figure~\ref{fi:matrix-pqtree-b} is represented in Figure~\ref{fi:matrix-pqtree-c} with a vertex $v_\rho$ with two spokes $f$ and $g$.
	By performing this replacement for each FPQ-tree of $\theta_{n_C}$ and by connecting the spokes of the gadgets that correspond to the same edge, we obtain a graph $\hat{G_C}$. Gutwenger et al.~\cite{gkm-ptoei-08} show that $G_C$ is planar with the embedding constraints if and only if $\hat{G_C}$ is a planar graph.
	In order to obtain a planar NodeTrix representation, we compute a planar embedding of $\hat{G_C}$ and replace each gadget $W^u$ (corresponding to cluster $C_u$) by a matrix as follows.
	Let $W_x^u$ be a wheel of $W^u$, and let $t_1$, $t_2$, $\dots$, $t_{|C_u|}$, $r_1$, $r_2$, $\dots$, $r_{|C_u|}$, $b_{|C_u|}$, $\dots$, $b_2$, $b_1$, $l_{|C_u|}$, $\dots$, $l_2$, and $l_1$ be the spokes that are encountered by walking clockwise along the cycle of $W_x^u$. Replace $W_x^u$ with a matrix $M_u$ whose vertices are placed according to the permutation $v_1, \dots, v_{|C_u|}$. The spokes of $W^u$ that are adjacent to $t_i$ ($i=1,\dots, |C_u|$) are connected to $v_i$ on the top side of $M_u$, analogously for the spokes that are adjacent to $r_i$, $b_i$, and $l_i$, are connected to $v_i$ on the right, bottom, or left side of $M_u$, respectively.
	
	By performing this replacement for each gadget of $\hat{G_C}$, we obtain a planar NodeTrix representation $G$ of the FPQ-choosable planar graph $G_C$. It follows that, if $G_C$ is FPQ-choosable planar, $G$ is NodeTrix planar with fixed sides.
	
	We now show that if $G$ is NodeTrix planar with fixed sides, then $G_C$ is FPQ-choosable planar.
	Let $\Gamma$ be a planar NodeTrix representation of $G$. Replace each matrix $M_v$ of $\Gamma$ by a vertex $v$, and connect to it all the inter-cluster edges that are incident to $M_v$.
	We obtain a planar drawing $\Gamma'$ such that the cyclic order of the edges incident to each vertex $v$ of $\Gamma'$ reflects the cyclic order of the edges incident to matrix $M_v$ in $\Gamma$.
	Such an order corresponds to one of the $|C_v|!$ FPQ-trees associated with $v$ in $G_C$ ($|C_v|$ is the number of rows and columns of $M_v$). Therefore, $G_C$ is FPQ-choosable planar
	\qed
\end{proof}

\begin{corollary}\label{co:nodetrix-free-sides}
	Let $G$ be a flat clustered $n$-vertex graph whose clusters have size at most $k$ and whose vertices have degree at most $d$. Let $t$ be the treewidth of $G$. If  the constraint graph of $G$ is biconnected, there exists an $O((k! 4^{kd})^ {\frac{9}{4}t} \cdot n^2 + n^3)$-time algorithm to test whether $G$ is NodeTrix planar with free sides.
\end{corollary}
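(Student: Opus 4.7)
My plan is to reduce NodeTrix planarity testing with free sides to an instance of \pqchoosable on the constraint graph~$G_C$ of~$G$, following the same blueprint as Theorem~\ref{th:nodetrix-branchwitdh} but enlarging the set of FPQ-trees associated with each vertex so as to absorb the extra freedom of attaching each inter-cluster edge to any of the four sides of its incident matrix.

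Concretely, for each cluster $C_v$ I would associate with the corresponding vertex $v$ of $G_C$ one FPQ-tree for every pair consisting of (i) a permutation of the vertices of $C_v$, which fixes the rows/columns of the matrix $M_v$, and (ii) an assignment of each inter-cluster edge incident to $C_v$ to one of the four sides of $M_v$. The FPQ-tree itself is built exactly as the matrix FPQ-tree in the fixed-sides setting: an F-node whose children are the $4|C_v|$ P-nodes encoding the four sides of each row/column in the clockwise order prescribed by the permutation, with the inter-cluster edges appearing as leaves under the P-node of their chosen (vertex, side) slot. Since $|C_v|\le k$ and every vertex of $G$ has degree at most~$d$, cluster $C_v$ is touched by at most $kd$ inter-cluster edges, and so there are at most $k!\cdot 4^{kd}$ such pairs. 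This gives $|D(v)|\le k!\cdot 4^{kd}$ for every vertex $v$ of $G_C$, hence $D_{\mathrm{max}}\le k!\cdot 4^{kd}$.

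The correctness of the reduction would then follow by essentially the same gadget-based argument used in Theorem~\ref{th:nodetrix-branchwitdh}. From a planar NodeTrix representation of $G$ with free sides I can read off, for each cluster, both the chosen permutation of its vertices and the side on which each inter-cluster edge attaches; these two pieces of information single out one FPQ-tree from $D(v)$, and the cyclic order around $v$ in the drawing obtained by collapsing each matrix to a point is consistent with that tree, so $G_C$ is FPQ-choosable planar. Conversely, starting from an FPQ-choosable planar assignment, the wheel-and-P-node gadget construction of Gutwenger~et~al.~\cite{gkm-ptoei-08} produces a planar drawing that I can expand into a planar NodeTrix representation of $G$ by replacing each wheel gadget with the matrix prescribed by the (permutation, side-assignment) pair associated with the chosen tree.

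Finally, applying Theorem~\ref{th:pqchoosable-branchwidth} with $D_{\mathrm{max}}\le k!\cdot 4^{kd}$ and then converting branchwidth into treewidth exactly as done at the end of Theorem~\ref{th:nodetrix-branchwitdh} would yield the announced $O((k!\,4^{kd})^{\frac{9}{4}t}\cdot n^2 + n^3)$ bound. The main point that needs to be checked is that the enlarged family of FPQ-trees captures \emph{precisely} the cyclic orders of edges around a matrix $M_v$ that are realizable across all free-sides choices; I expect this to be essentially by inspection, since each (permutation, side-assignment) pair reproduces the matrix FPQ-tree of Theorem~\ref{th:nodetrix-branchwitdh} for the corresponding fixed-sides instance, and the union over all pairs exhausts exactly the cyclic orders realizable under free sides.
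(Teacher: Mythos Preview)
Your proposal is correct and follows essentially the same approach as the paper. The paper's own proof is a two-line argument observing that the number of configurations per cluster is $k!\,4^{kd}$ and then invoking Theorem~\ref{th:nodetrix-branchwitdh}; your write-up simply makes explicit the construction of the enlarged set $D(v)$ (one matrix FPQ-tree per pair (permutation, side-assignment)) and spells out the two directions of the correctness argument, which the paper leaves implicit.
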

\begin{proof}
	The number of possible configurations in which the inter-cluster edges are incident to the matrices is $k!4^{kd}$.
	Therefore, by Theorem~\ref{th:nodetrix-branchwitdh} the statement follows.
	\qed
\end{proof}

\section{Concluding Remarks and Open Problems}\label{se:open-problems}

In this paper we have studied the problem of testing when a graph $G$ is planar subject to hierarchical embedding constraints. These constraints are given as part of the input by equipping each vertex of $G$ with a set of FPQ-trees. While the problem is NP-complete even for sets of FPQ-trees having cardinality bounded by a constant and it is W[1]-hard parameterized by tree-with, for biconnected graphs it becomes fixed-parameter tractable if parameterized by both the treewidth and by the maximum number of FPQ-trees associated with a vertex. Besides being interesting on its own right, \pqchoosable can be used to model and study other graph planarity testing problems. As a proof of concept, we have applied our results to the study of NodeTrix planarity testing of clustered graphs.


We mention three open problems that in our opinion are worth future studies.
\begin{itemize}
	\item Theorem~\ref{th:pqchoosable-npcomplete} is based on a reduction that associates six FPQ-trees to each vertex of a suitable instance of \pqchoosable. 
	It would be interesting to study the complexity of \pqchoosable when every vertex is associated with less than six FPQ-trees. We recall that \pqchoosable can be solved in polynomial time if $|D_{max}| = 1$~\cite{gkm-ptoei-08}.
	\item It would be interesting to improve the time complexity stated by Theorem~\ref{th:pqchoosable-branchwidth}.
	\item It would be interesting to extend Theorem~\ref{th:pqchoosable-branchwidth} to simply connected graphs.
	\item It would be interesting to apply our approach to other problems of planarity testing related with hybrid representations of clustered graphs including, for example, intersection-link representations and $(k,p)$-planar representations (see, e.g.,~\cite{addfpr-ilrg-17,dllrt-kpprhp-walcom19}).
\end{itemize}

\bigskip
\noindent\textit{Funding.} This work was partially supported by: $(i)$ MIUR, the Italian Ministry of Education, University and Research, under grant 20174LF3T8 AHeAD: efficient Algorithms for HArnessing networked Data; $(ii)$  Dipartimento di Ingegneria dell'Universit\`a degli Studi di Perugia, under grants RICBASE2017WD and RICBA18WD: ``Algoritmi e sistemi di analisi visuale di reti complesse e di grandi dimensioni''; $(iii)$ German Science Foundation (DFG), under grant Ru~1903/3-1.

\bibliography{biblio}
\bibliographystyle{elsarticle-num}

\end{document}